\spnewtheorem{observation}[obs]{Observation}{\bfseries}{\itshape}
\newcommand{\ignore}[1]{}
\newcommand{\cA}{{\cal A}}
\newcommand{\cG}{\mathcal{G}}
\newcommand{\cN}{{\cal N}}
\newcommand{\eps}{\varepsilon}
\newcommand{\poly}{\mathrm{poly}}
\newcommand{\E}{\hbox{\bf E}}
\newcommand{\Pe}[2]{(#1)_{#2}}
\newcommand{\mG}{\mathbb{G}}
\newcommand{\Sec}[1]{\hyperref[sec:#1]{Section\,\ref*{sec:#1}}} 
\newcommand{\Eqn}[1]{\hyperref[eq:#1]{(\ref*{eq:#1})}} 
\newcommand{\Fig}[1]{\hyperref[fig:#1]{Fig.\,\ref*{fig:#1}}} 
\newcommand{\Tab}[1]{\hyperref[tab:#1]{Tab.\,\ref*{tab:#1}}} 
\newcommand{\Thm}[1]{\hyperref[thm:#1]{Theorem\,\ref*{thm:#1}}} 
\newcommand{\Fact}[1]{\hyperref[fact:#1]{Fact\,\ref*{fact:#1}}} 
\newcommand{\Lem}[1]{\hyperref[lem:#1]{Lemma\,\ref*{lem:#1}}} 
\newcommand{\Prop}[1]{\hyperref[prop:#1]{Prop.~\ref*{prop:#1}}} 
\newcommand{\Cor}[1]{\hyperref[cor:#1]{Corollary~\ref*{cor:#1}}} 
\newcommand{\Conj}[1]{\hyperref[conj:#1]{Conjecture~\ref*{conj:#1}}} 
\newcommand{\Def}[1]{\hyperref[def:#1]{Definition~\ref*{def:#1}}} 
\newcommand{\Alg}[1]{\hyperref[alg:#1]{Algorithm~\ref*{alg:#1}}} 
\newcommand{\Ex}[1]{\hyperref[ex:#1]{Ex.~\ref*{ex:#1}}} 
\newcommand{\Clm}[1]{\hyperref[clm:#1]{Claim~\ref*{clm:#1}}} 
\newcommand{\Obs}[1]{\hyperref[obs:#1]{Observation~\ref*{clm:#1}}} 
\def\max{{\sf max}}
\newcommand{\crr}{\mathrm{Cr}}
\newcommand{\lp}[1]{\log_{\frac{1}{p}} #1}
\newcommand{\keywords}[1]{\par\addvspace\baselineskip
\noindent\keywordname\enspace\ignorespaces#1}
\begin{document}

\mainmatter  

\title{On Counting Cliques, Clique-covers and Independent sets in Random Graphs\footnote{ supported in part by NSF Grant CCF-1320814}}

\titlerunning{Counting  in Random Graphs}

%
%
\author{Kashyap Dixit%
\and Martin F{\"u}rer}
\authorrunning{Kashyap Dixit \and Martin F{\"u}rer}

\institute{Pennsylvania State University\\
111 IST Building, University Park 16801, USA\\
kashyap@cse.psu.com\\
furer@cse.psu.edu\\
}

%
%

\toctitle{Counting  in Random Graphs}
\tocauthor{K.Dixit  {\em and} M.F{\"u}rer}
\maketitle

\begin{abstract}
We study the problem of counting the number of {\em isomorphic} copies of a given {\em template} graph, say $H$, in the input {\em base} graph, say $G$. In general, it is believed that polynomial time algorithms that solve this problem exactly are unlikely to exist. 
So, a lot of work has gone into designing efficient {\em approximation schemes}, especially, when $H$ is a perfect matching. 
 In this work, we present efficient approximation schemes to count $k$-Cliques, $k$-Independent sets and $k$-Clique covers in random graphs. \\

We present {\em fully polynomial time randomized approximation schemes} (fpras) to count $k$-Cliques and $k$-Independent sets in a random graph on $n$ vertices when $k$ is at most $(1+o(1))\log n$, and $k$-Clique covers when $k$ is a constant. 
The problem of counting $k$-cliques and $k$-independent sets was an open problem in [Frieze and McDiarmid, 1997].
In other words, we have a fpras to evaluate the first $(1+o(1))\log n$ terms of the {\it clique polynomial} and the {\it independent set polynomial} of a random graph. 
[Grimmett and McDiarmid, 1975] present a simple greedy algorithm that {\em detects} a clique (independent set) of size $(1+o(1))\log_2 n$ in $G\in \cG(n,\frac{1}{2})$ with high probability. No algorithm is known to detect a clique or an independent set of larger size with non-vanishing probability. 
Furthermore, [Coja-Oghlan and Efthymiou, 2011] present some evidence that 
one cannot hope to easily improve a similar, almost 40 years old bound for sparse random graphs. Therefore, our results are unlikely to be easily improved. \\

We use a novel approach to obtain a recurrence corresponding to the variance of each estimator. Then we upper bound the variance using the corresponding recurrence. This leads us to obtain a polynomial upper bound on the critical ratio.
As an aside, we also obtain an alternate derivation of the closed form expression for the $k$-th moment of a binomial random variable using our techniques. The previous derivation [Knoblauch (2008)] was based on the moment generating function of a  binomial random variable.
\keywords{Random Sampling, Approximate Counting, Randomized Approximation Schemes for \#P-complete problems.}
\end{abstract}

\section{Introduction}
Given a {\em base graph} $G$ and a {\em template graph} $H$, {\em the subgraph isomorphism} problem is to decide whether an edge preserving injection $\phi$ between the vertices of $H$ and $G$ exists. 
That is, for every edge $\{u,v\}$ in $H$, $\{\phi(u),\phi(v)\}$ is an edge in $G$. Subgraph isomorphism is a generalization of several fundamental NP-complete problems, like Hamiltonian Path and Clique. 
The problem has applications in many areas, including cheminformatics~\cite{U76}, pattern discovery in databases~\cite{KK07}, bioinformatics~\cite{PCJ06} and social networks~\cite{SPRH06}. 

Another widely studied related fundamental problem is that of counting the number of copies of $H$ in $G$. In general, this problem is \#P-complete (Valiant~\cite{V79}). The class \#P is defined as $\{f:$ 
There exists a non-deterministic polynomial time Turing machine $M$, such that on input $x$, the computation tree of $M$ has exactly $f(x)$ accepting leaves$\}$. The problems complete in this class are computationally quite difficult, 
since an oracle access to \#P complete problem would make it possible to solve any problem in the polynomial hierarchy in polynomial time (Toda~\cite{T91}). 

The $k$-Clique problem asks whether there exists a $k$-clique in the input graph $G$. A $k$-Clique is the complete graph on $k$ vertices. The $k$-Clique problem has numerous applications, particularly in bioinformatics and social networks~\cite{PCJ06,SPRH06}. 
Counting $k$-cliques in a web-graph has applications in social network analysis. In particular, this gives an estimate of the number of closed communities in the web-graphs. Therefore, fast algorithms for counting $k$-cliques in web-graphs give an insight to the evolution of Internet. 

The $k$-Clique cover problem asks for the existence of a perfect $k$-clique packing in $G$. More precisely, given base graph $G$ with $n$ vertices and template graph $H$ that is $n/k$ vertex disjoint and edge disjoint copies of $k$-cliques, does there exist an injective mapping from $H$ to $G$. 
The decision problem $k$-Clique Cover, that is $\{(G,k)\text{: There exists a disjoint cover of $G$ by $k$-cliques}\}$ is NP-complete on general graphs with clique number $3$~\cite{K72}.  The $k$-Clique cover problem has applications in the {\em orgy problem}~\cite{CHW82}:
 Given a group of people with affinities and aversion between them, is it possible to divide them into $k$  members each, such that every person in each group is compatible with every other person in the group. 
 Some of the scheduling problems can also be modeled as an orgy problem. 
We are given $n$ jobs of length $\le T$ seconds and $n/k$ machines. Also, for each job $j$, we are given a list of {\em conflicting jobs} which can not be scheduled with $j$ on the same machine. The problem is to schedule the jobs on the machines such that the total time to complete all the jobs is minimized. 

The clique-polynomial~\cite{HL94} of a graph $G=(V,E)$ is given by $1+\sum_{i=1}^{\omega(G)}c_i x^i$. Here, $c_i$ denotes the number of $i$-Cliques in $G$, $\omega(G)$ denotes the size of largest clique in $G$. The independent-set polynomial~\cite{HL94} of a graph is defined analogously. 
In general, computing the clique-polynomial and the independent set polynomial of a graph $G$ is $\#P$-complete. 

We consider template graphs which are vertex disjoint union of cliques. More specifically, we will be considering problems of counting cliques and clique covers. We note that our techniques can be extended to counting embeddings of template graphs which are disjoint union of cliques of possibly different sizes.
The counting version of the $k$-Clique problem is \#P-complete in general.  The counting version of the $k$-Clique cover problem is \#P-complete even for $k=2$ (Valiant~(\cite{V79})), where $H$ is a perfect matching. 

Note that the counting versions of the aforementioned problems are extremely hard even for the simple cases. So, we try to come up with {\em fully polynomial time approximation schemes} (abbreviated as fpras) for these problems that work well for {\em almost all} graphs. More precisely, 
fpras must run in time $\poly(n, \eps^{-1})$ and return an answer within a relative error of $(1\pm \eps)$ with high probability (i.e., probability tending to $1$ as $n\rightarrow \infty$) for graphs that are uniformly randomly sampled from $G\in \cG(n, p)$. Here, $\cG(n, p)$   denotes the class of 
graphs in which each edge occurs with probability $p$. Note that when $p=\frac{1}{2}$, each graph $G\in \cG(n,p)$ is equiprobable. Another commonly studied model is $\mathbb{G}(n,m)$ where each graph with $n$ vertices 
and $m$ edges is assigned the same probability, which is $\binom{N}{m}^{-1}$, where $N=\binom{n}{2}$. 

The theory of random graphs was initiated by Erd\H{o}s and R\'{e}nyi~\cite{ER60}. We work with the model $\cG(n,p)$ where we are given a fixed set of $n$ vertices and each of the $\binom{n}{2}$ edges is added with probability $p$. 

Our analysis also provides an alternate derivation of the closed form of the $k^{th}$ moment of a binomial random variable $X$ sampled from $\mathrm{Binomial}(n,p)$, which has been derived by Knoblauch~\cite{K08a} using {\em moment generating function}. We derive the same results using simple binomial equalities that we obtain using the binomial theorem.
\subsection{Our results}
In this work, we present new results for $k$-Clique and $k$-Clique cover counting problems  in random graphs. Our algorithm is based on the idea of Rasmussen's unbiased estimator for permanents~\cite{R94}. 
It has been widely used in the context of subgraph isomorphism counting problems~\cite{R97,FK05,FK08}. For counting $k$-cliques in the input random graph $G$, we embed a $k$-clique into $G$, doing so one vertex at a time chosen randomly. 
If the procedure succeeds, we compute the probability with which the clique is obtained in $G$ 
and output its inverse. As shown in~\cite{FK08}, this is an unbiased estimate of the number of cliques in $G$. We state the results below in \Thm{app-clique}.  
In this work, we generalize Rasmussen's approach~\cite{R94} to efficiently count $k$-cliques and $k$-clique covers in random graphs. As a corollary, we also get a fpras for counting $k$-independent sets in random graphs. 
Note that~\cite{CE11} indicates that our bounds is extremely difficult to be improved. 
\begin{theorem}\label{thm:app-clique}
Let $H$ be a $k$-Clique, where $k=(1+o(1))\lp{n}$. Then, there exists a fpras for estimating the number of copies of $H$ in $G\in \cG(n,p)$ for constant $p$.
\end{theorem}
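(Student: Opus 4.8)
The plan is to analyze Rasmussen's self-avoiding-walk estimator for the number of $k$-cliques in $G\in\cG(n,p)$ and show its critical ratio (second moment over squared expectation) is polynomial in $n$, which by standard median-of-means boosting yields a fpras. First I would set up the estimator: build an ordered clique $v_1,\dots,v_k$ one vertex at a time, at step $i$ picking $v_i$ uniformly among the common neighbors of $v_1,\dots,v_{i-1}$ (abort if none exist), and on success output $Z=\prod_{i=1}^{k}d_{i}$ where $d_i$ is the size of the candidate set at step $i$, normalized by $k!$ to account for orderings. A short computation (as in~\cite{FK08}) gives $\E[Z]$ equal to the number of $k$-cliques in $G$, so everything reduces to bounding $\E[Z^2]$.

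The key step — and the one the abstract flags as the novelty — is to obtain a recurrence for the second moment by conditioning on the overlap structure of two independent runs of the estimator. Running the estimator twice independently, I would track at each level $i$ the pair $(A_i,B_i)$ of common-neighborhood sets of the two partial cliques together with the size $j$ of their agreement prefix; the contribution splits according to whether the two runs pick the same next vertex (probability roughly $1/|A_i|$ if they still agree) or diverge. This yields a recurrence of the form $M_i \le (\text{coupled term}) + (\text{decoupled term})$, where the decoupled term factors into two copies of a first-moment recurrence and the coupled term carries the dependence one level deeper. The expectations over the random graph of the relevant set sizes concentrate: for constant $p$, a partial clique of size $i$ has about $n p^{i}$ common neighbors, and a pair of partial cliques sharing a prefix of size $j$ has about $np^{2i-j}$ common common-neighbors, with Chernoff-type deviations that are negligible once $i=(1+o(1))\log_{1/p}n$ since then $np^i$ is still polynomially large (indeed $n^{\Omega(1)}$, after the $o(1)$ slack is chosen appropriately). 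I would feed these concentration estimates into the recurrence and solve it; telescoping the $k$ levels produces a bound $\E[Z^2] \le \poly(n)\cdot \E[Z]^2$, i.e. a polynomial critical ratio. (Here is where the aside about the $k$-th moment of a binomial variable enters: bounding the expected product $\E\big[\prod_i d_i^2\big]$ over the graph's randomness requires exactly such moment formulas, derived via binomial identities rather than the moment generating function.)

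The main obstacle will be controlling the coupled term of the recurrence uniformly over all $k=(1+o(1))\log_{1/p}n$ levels: a naive union bound over the ``bad'' event that some common-neighborhood set along either run deviates from its expectation costs a factor exponential in the number of candidate sets unless one is careful, and near the top of the recursion tree $np^i$ is only polynomially large, so the concentration is not overwhelming. I expect to handle this by choosing the $o(1)$ term in $k=(1+o(1))\log_{1/p}n$ so that $np^k = n^{\delta}$ for a small fixed $\delta>0$, which keeps every relevant set size $n^{\Omega(1)}$ throughout, and by absorbing the rare deviation events into the variance using the crude a.s.\ upper bound $Z\le n^k = n^{O(\log n)}$ times their (quasi-polynomially small) probability — a trade-off that is standard in this line of work~\cite{R97,FK08} and contributes only a lower-order term. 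Finally I would assemble the pieces: polynomial critical ratio plus $\poly(n)$ running time per trial, then $O(\eps^{-2}\log\delta^{-1}\cdot\mathrm{poly}(n))$ independent trials with the median-of-means estimator give the $(1\pm\eps)$ guarantee with high probability, completing the proof of \Thm{app-clique}. The $k$-independent-set case follows by applying the same argument to the complement graph $\bar G\in\cG(n,1-p)$, and the ``constant $p$'' hypothesis is what makes $1-p$ bounded away from $0$ as well.
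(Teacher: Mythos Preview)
Your setup of the estimator matches the paper exactly, and the reduction to bounding the critical ratio is the right framework. However, the route you propose for bounding $\E[X^2]$ diverges from the paper's, and two of your proposed steps have genuine problems.

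First, the ``two independent runs'' interpretation is not what $\E_\cA[X^2]$ computes. For a fixed $G$, $\E_\cA[X^2]=\sum_{C}\prod_i d_i(C)$, a sum over \emph{single} ordered cliques $C$ weighted by the product of their candidate-set sizes; expanding $\prod_i d_i(C)$ as a sum over auxiliary choices $w_i\in\cN_i$ does not produce a second clique, since the $w_i$ need not be mutually adjacent and the $\cN_i$ are determined by $C$ alone. So the coupled/decoupled split you describe does not correspond to the quantity you need to bound. Second, forcing $np^k=n^\delta$ for fixed $\delta>0$ means $k\le(1-\delta)\lp n$, which is strictly weaker than $(1+o(1))\lp n$; the paper's analysis (see the final remark in their Lemma bounding $h(i)$) works up to $k=(1+\eps_n)\lp n$ with $\eps_n=O(1/\sqrt{\lp n})$, a regime where $np^k$ need not be polynomially large and your concentration step would fail at the top levels. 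The crude trade-off $Z\le n^k=n^{O(\log n)}$ against Chernoff tails does not rescue this, since when $np^i=n^{o(1)}$ the deviation probabilities are not quasi-polynomially small.

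What the paper actually does is avoid concentration entirely. It bounds the \emph{critical ratio of averages} $\crr(X)=\E_\cG[\E_\cA[X^2]]/(\E_\cG[\E_\cA[X]])^2$ (the expectation over $\cG(n,p)$ is taken \emph{first}), and transfers this to the per-graph critical ratio via Ruci\'nski's second-moment theorem for subgraph counts. The numerator $\E_\cG[\E_\cA[X^2]]$ is computed exactly as a recursively defined quantity $N(k,n,p)$ (their ``$k$-nesting''), and the heart of the argument is a structural lemma showing $N(k,\ell,p)=\sum_{j=k}^{2k-1}\ell\,\Pe{\ell}{j}\,f_{k,j}(p)$ for coefficients $f_{k,j}(p)$ satisfying the recurrence $f_{k+1,j}(p)=p^{j-1}\big((j-1)f_{k,j-1}(p)+f_{k,j-2}(p)\big)$; this is where the binomial-moment identities enter, not as concentration inputs but as exact evaluations of $\sum_m m\Pe{m}{j}\binom{n}{m}p^m(1-p)^{n-m}$. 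An inductive bound $f_{k,2k-i-1}(p)\le k^{2i}p^{\binom{k}{2}+\binom{k-i}{2}}$ then reduces $\crr(X)$ to a sum of $k$ terms each of the form $\big(k^2(\ell-k+1)^{-1}p^{-(k-(i+1)/2)}\big)^i$, and an elementary optimization over $i$ shows this is polynomial precisely when $k=(1+o(1))\lp n$ with the $o(1)$ of order $1/\sqrt{\lp n}$. Nothing probabilistic happens after the outer $\E_\cG$ is taken; the whole bound is algebra.
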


Note that counting $k$-cliques in $\cG(n,p)$ is equivalent to counting $k$-independent sets in $\cG(n,1-p)$. Since $p$ is a constant in our case, we have a fpras for counting $k$-Independent sets of a random graph.

\begin{theorem}\label{cor:app-ind-sets}
Let $H$ be a $k$-independent set, where $k=(1+o(1))\lp{n}$. Then, there exists a fpras for estimating the number of copies of $H$ in $G\in \cG(n,1-p)$ for constant $p$.
\end{theorem}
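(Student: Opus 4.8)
The plan is to derive this statement as an immediate corollary of \Thm{app-clique} by passing to the complement graph. First I would recall that a set $S$ of $k$ vertices induces an independent set in a graph $G$ if and only if $S$ induces a $k$-clique in the complement $\overline{G}$; hence the number of copies of a $k$-independent set $H$ in $G$ equals the number of copies of a $k$-clique $H'$ in $\overline{G}$, where $H'$ is the complete graph on $k$ vertices. Consequently, the Rasmussen-style estimator of \Thm{app-clique} that embeds a $k$-clique into its base graph one random vertex at a time becomes, when run on $\overline{G}$, verbatim an estimator for the $k$-independent-set count of $G$, with the same (zero) bias, the same variance recurrence, and the same running time up to the $O(n^2)$ overhead of writing down $\overline{G}$.

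Second, I would verify that complementation maps the relevant distribution correctly. If $G$ is drawn from $\cG(n,1-p)$, then each potential edge is present in $G$ independently with probability $1-p$, hence absent in $\overline{G}$ with probability $1-p$, i.e.\ present in $\overline{G}$ with probability $p$, and these events remain mutually independent. Thus $\overline{G}$ is distributed exactly as $\cG(n,p)$. Since $p$ is a fixed constant, so is $1-p$, and both \Thm{app-clique} and the present statement are phrased with the parameter $\lp{n}=\log_{1/p} n$: this is the correct threshold in either direction, because the clique number of $\cG(n,p)$ and the independence number of $\cG(n,1-p)$ are both asymptotically $2\log_{1/p} n$, so the admissible range $k=(1+o(1))\lp{n}$ is identical on both sides of the reduction.

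Finally, I would invoke \Thm{app-clique} with base graph $\overline{G}\in\cG(n,p)$ and template the $k$-clique $H'$ for $k=(1+o(1))\lp{n}$ to obtain a fpras for the $k$-clique count of $\overline{G}$, which by the first step is a fpras for the $k$-independent-set count of $G$; the $\poly(n,\eps^{-1})$ running time, the $(1\pm\eps)$ relative error, and the high-probability guarantee are all inherited unchanged. There is essentially no hard step here: the only points requiring (trivial) care are that complementation is measure-preserving between $\cG(n,1-p)$ and $\cG(n,p)$ and that the parameter window of \Thm{app-clique} is symmetric enough to cover the independent-set regime, both of which hold as noted above.
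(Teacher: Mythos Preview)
Your proposal is correct and matches the paper's approach exactly: the paper also derives this theorem as an immediate corollary of \Thm{app-clique} via the observation that counting $k$-cliques in $\cG(n,p)$ is equivalent to counting $k$-independent sets in $\cG(n,1-p)$, together with the fact that $p$ constant implies $1-p$ constant. Your write-up is simply a more careful spelling-out of the complementation argument that the paper dispatches in one line.
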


For counting $k$-clique cover, we embed one clique at a time, until the whole graph is covered by $k$-cliques. The {\em key} observation here is that after embedding a clique, the residual base graph still remains random with edge probability $p$. We obtain the following theorem for counting $k$-clique covers.
\begin{theorem}\label{thm:app-clique-covers}
Let $H$ be a $k$-clique cover, where $k=O(1)$.  Then, there exists a fpras for estimating the number of copies of $H$ in $G\in \cG(n,p)$ for constant $p$.
\end{theorem}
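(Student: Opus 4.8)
\emph{The estimator and unbiasedness.}
The plan is to build, following the strategy sketched above for \Thm{app-clique}, a Rasmussen-type recursive unbiased estimator that now peels off one $k$-clique at a time until all of $G$ is covered, and then to control its variance through a recurrence whose per-level factor is $1+o(1)$. On an input on $m$ vertices ($k\mid m$; otherwise output $0$), pick an anchor vertex $u$ (say the least-indexed uncovered one) and grow a $k$-clique $\{u,v_1,\dots,v_{k-1}\}$ greedily: choose $v_1$ uniformly from $N(u)$, then $v_i$ uniformly from $N(u)\cap N(v_1)\cap\cdots\cap N(v_{i-1})$; if any of these sets is empty, output $0$. Writing $d_i$ for the size of the set from which $v_i$ is drawn, this clique is produced with probability $\prod_{i=1}^{k-1} d_i^{-1}$; delete its vertices, recurse on the residual graph $G'$ to get $X_{G'}$, and output $X_G=\frac1{(k-1)!}\big(\prod_{i=1}^{k-1} d_i\big)X_{G'}$. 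First I would show by induction on $m$ that $\E_{\mathrm{alg}}[X_G]=Z_G$, the number of $k$-clique covers of $G$: the factor $\frac1{(k-1)!}$ cancels the over-counting of the $(k-1)!$ orderings of the first block, and the inductive step uses exactly the \textbf{key observation} noted above --- the first-block embedding inspects only edges incident to $\{u,v_1,\dots,v_{k-1}\}$, so conditioned on it the residual graph is still distributed as $\cG(m-k,p)$, and the hypothesis applies.

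\emph{The variance recurrence.}
Let $S(m)=\E_{G\sim\cG(m,p)}\E_{\mathrm{alg}}[X_G^2]$ and $M(m)=\E_{G\sim\cG(m,p)}[Z_G]$. Squaring the recursion and using again that $\prod_i d_i$ depends only on edges incident to $\{u\}\cup\{v_i\}$ while $X_{G'}^2$ depends only on edges inside $G'$ --- disjoint edge sets --- one obtains, by symmetry over the $(m-1)(m-2)\cdots(m-k+1)$ ordered tuples,
\[
S(m)=\frac{\binom{m-1}{k-1}}{(k-1)!}\,q(m)\,S(m-k),\qquad M(m)=\binom{m-1}{k-1}p^{\binom k2}M(m-k),
\]
where $q(m)=\E_G\big[\mathbf 1[\{u,v_1,\dots,v_{k-1}\}\text{ is a clique}]\cdot\prod_{i=1}^{k-1} d_i\big]$ for one fixed ordered tuple. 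Hence the critical ratio $\rho(m):=S(m)/M(m)^2$ satisfies $\rho(m)=\dfrac{q(m)}{(k-1)!\,\binom{m-1}{k-1}\,p^{2\binom k2}}\,\rho(m-k)$.

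\emph{The main step --- estimating $q(m)$ sharply.}
Conditioned on the clique event (which constrains only the $\binom k2$ edges inside the block), $d_i=(k-i)+B_i$, where $B_i$ counts the common neighbours of $u,v_1,\dots,v_{i-1}$ among the remaining $m-k$ vertices, so $B_i\sim\mathrm{Bin}(m-k,p^{\,i})$; the $B_i$ are nested, hence positively correlated, but $\mathrm{Cov}(B_i,B_j)=\Theta(m)$ while $\E[B_i]\E[B_j]=\Theta(m^2)$. The crucial claim is therefore that the expectation of the product factorises to leading order:
\[
q(m)=p^{\binom k2}\,\E\Big[\textstyle\prod_{i=1}^{k-1} d_i\,\Big|\,\text{clique}\Big]=p^{\binom k2}\prod_{i=1}^{k-1}\!\big(m p^{\,i}\big)\big(1+O(k^2/m)\big)=m^{k-1}p^{2\binom k2}\big(1+O(k^2/m)\big).
\]
Plugging this in gives $\rho(m)/\rho(m-k)=\dfrac{m^{k-1}}{(m-1)(m-2)\cdots(m-k+1)}\big(1+O(k^2/m)\big)=1+O(k^2/m)$, and telescoping over the $m/k$ levels down to the base $\rho(k)=p^{-\binom k2}$ yields $\rho(n)=n^{O(k)}=\poly(n)$ since $k=O(1)$. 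I expect this sharp estimate of $q(m)$ to be the \textbf{main obstacle}: any cruder bound --- e.g.\ $\prod_i d_i\le m^{k-1}$, or replacing every $B_i$ by $B_1\sim\mathrm{Bin}(m,p)$ and invoking only the $(k-1)$-st moment of $\mathrm{Bin}(m,p)$ --- loses a multiplicative \emph{constant} at each of the $\Theta(n)$ levels and blows $\rho(n)$ up to $\exp(\Theta(n))$. One must keep the $p^{\,i}$-scaling of each $d_i$ and argue that the covariance (and higher mixed-moment) contributions are of lower order; this is precisely where the closed forms for the moments of binomial random variables are needed, and it also yields, as a byproduct, the alternative derivation of the $k$-th moment of a binomial advertised in the abstract.

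\emph{From a polynomial critical ratio to an fpras.}
Finally I would run $N=\poly(n)/\eps^{2}$ independent copies of the estimator and output their average $\bar X$. For a fixed $G$: if $Z_G=0$ then $\E_{\mathrm{alg}}[X_G]=0$ forces $X_G\equiv 0=Z_G$; otherwise Chebyshev bounds the failure probability of $\bar X$ by $\E_{\mathrm{alg}}[X_G^2]/(N\eps^{2}Z_G^{2})$. Averaging over $G$ and combining $S(n)=\E_G\E_{\mathrm{alg}}[X_G^2]\le\poly(n)\,M(n)^2$ with the concentration of the number of clique covers $Z_G$ --- for which a short second-moment computation (counting overlapping pairs of clique covers) gives $\E_G[Z_G^{2}]=O(M(n)^2)$, while $\Pr_G[Z_G=0]=o(1)$ because $\cG(n,p)$ has a $K_k$-factor with high probability --- one concludes, by standard boosting, that $\bar X$ lies within a $(1\pm\eps)$ factor of $Z_G$ with high probability over $G\sim\cG(n,p)$ and the algorithm's coins, in time $\poly(n,\eps^{-1})$. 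This establishes the theorem.
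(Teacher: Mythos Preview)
Your proposal is correct and follows the same overall architecture as the paper: the same Rasmussen-type peel-off estimator, the same factorisation of the critical ratio of averages into a product of per-clique ratios (using that the residual graph remains $\cG(m-k,p)$), the same target bound $1+O(1/m)$ on each factor, and the same finish via concentration of $Z_G$ (the paper packages this last step as \Prop{app-FK08}, invoking Ruci\'nski's second-moment result; your ``short second-moment computation'' and $K_k$-factor existence are exactly the ingredients of that proposition).

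The one genuine difference is in how the per-level factor is controlled. The paper does \emph{not} estimate $q(m)=\E[\mathbf 1_{\text{clique}}\prod_i d_i]$ by a direct factorisation-plus-covariance argument. Instead it reuses the polynomial machinery built for \Thm{app-clique}: the recursive definition of $N(k,\ell,p)$ (\Def{app-k-nesting}) together with \Lem{app-num-term} yields $N(k,\ell,p)=\sum_{j=k}^{2k-1}\ell\,(\ell)_j\,f_{k,j}(p)$, and then \Lem{app-cri-rat-int-bound} bounds the contribution of each $j$ separately (the $j=0$ term gives the leading $1$, $j=1$ is computed exactly, and $j\ge2$ are each $O(1/\ell)$ via the bounds on $f_{k,j}$ from \Lem{app-f-upper-bound}). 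Your route --- writing $d_i=(k-i)+B_i$ with $B_i\sim\mathrm{Bin}(m-k,p^i)$ and showing that in $\E[\prod_i B_i]$ the distinct-index tuples dominate --- is valid for constant $k$ and is arguably more transparent there, but it is only sketched; the paper's polynomial decomposition is what makes the argument uniform enough to also handle the $k=(1+o(1))\lp n$ regime of \Thm{app-clique}. One minor point: your ``averaging over $G$'' in the final step should be read as the two separate high-probability events (Markov on $\E_{\mathrm{alg}}[X_G^2]$ and the lower tail of $Z_G$), exactly as in \Prop{app-FK08}, rather than as a literal expectation of the Chebyshev bound.
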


 Our estimators for counting cliques and clique-covers are given in \Alg{app-clique} and \Alg{app-clique-cover} respectively in \Sec{app-estimators}.  As a side result, we obtain an alternate derivation of $\E[X^k]$  for a binomial random variable $X$, for all $k\ge 0$. We note that this has already been obtained in~\cite{K08a} 
using the moment generating function for binomial random variable. 

\textbf{Outline of the paper}: In \Sec{app-prev}, we give some of the related work to set perspective for our work. 
To introduce our techniques to the reader, we give a new derivation for the closed form of  $k$-th moment for binomial random variables using these techniques \Sec{app-model-definitions}. 
We move on to describe estimators for counting $k$-cliques and $k$-clique covers in \Sec{app-estimators}. We analyze these estimators for counting $k$-cliques and $k$-clique covers for random graphs in \Sec{app-clique} and \Sec{app-ccover} respectively, which is the main contribution of this paper. 

\section{Related work}\label{sec:app-prev}
A lot of work has been done in finding and counting of cliques and independent sets in graphs. One of the earliest result in the theory of random graphs is about showing that the independence number and clique number of a random graph $G\in \cG(n,\frac{1}{2})$ is about $2\log_2 n$. 
Grimmett and McDiarmid~\cite{GM75} analyzed simple greedy algorithm constructs an inclusion-maximal independent set. They showed that it yields an independent set of size $(1+o(1))\log_2 n$. Coja-Oghlan and Efthymiou~\cite{CE11} show some evidence for why no better algorithm could be found over many years. 

Luby and Vigoda~\cite{LV97} have shown a fully polynomial time scheme for counting independent sets in the graphs with maximum degree $\Delta\le 4$, which was later improved by Weitz~\cite{W06} to $\Delta\le 5$. On the other hand,
 Dyer, Freize and Jerrum~\cite{DFJ02} have shown that no fpras exists for counting independent sets in graphs with $\Delta\ge 25$ unless NP=RP. They also show that 
the Markov Chain Monte Carlo technique is likely to fail if $\Delta\ge 6$. Chandrasekaran et.al.~\cite{CCGSS11} have obtained fpras for higher degree graphs with large girths. 

A major breakthrough in counting perfect matchings ($2$-clique covers) was a polynomial time algorithm for planar graphs due to Kasteleyn~\cite{K61}.   For a bipartite graph, it corresponds to calculating the permanent of a $\{0,1\}$ matrix. In the seminal  paper of Valiant~\cite{V79}, 
 it has been shown to be \#P-complete, even though the decision version of this problem is in P.
The noted work of Jerrum, Sinclair and Vigoda~\cite{JSV01} presents a fpras for counting perfect matchings in bipartite graphs. The problem of existence and counting of covers in random graphs $G\in\cG(n,p)$ was addressed in the seminal work of Johansson, Kahn and Vu~\cite{JKV08}. 
They show that given a subgraph $H$, the number of $H$-covers in a random graph $G\in \cG(n,p)$ is $e^{-O(n)}(n^{v-1}p^m)^{n/v}$ for large enough $n$ with probability at least $1-n^{-\Omega(1)}$. 
Here $v=|V(H)|$ and $m=|E(H)|$.  Various approaches for getting an unbiased estimator with small variance have been explored for counting perfect matchings in other graphs. 
Some of these are determinant based approaches~\cite{GG81,KKLLL93,C04,LP86}, Markov chain Monte Carlo (MCMC) algorithms~\cite{B86,JS89,JSV01,BSVV08} and search based on Rasmussen's techniques ~\cite{R94,R97,FK05,FK08}. 
Chien~\cite{C04}  gives an efficient fpras for counting perfect matchings in random graphs. MCMC algorithms are polynomial time algorithms for all bipartite graphs. 
The estimators based on Rasmussen's approach (from~\cite{R94}) have also been proved to work well in random graphs, where they lead to simple, polynomial time approximation schemes. In this work, we generalize Rasmussen's approach to efficiently count $k$-Cliques and $k$-Clique covers in random graphs. 
As a corollary, we also get a fpras for counting $k$-Independent sets in random graphs.

 Rasmussen~\cite{R97} has given a fpras for counting cliques and independent sets in random graphs. But it is unclear how to extend that algorithm for counting $k$-cliques~\cite{FM97} or $k$-Independent sets in random graphs. We note here that 
F{\"u}rer and Kasivaswanathan~\cite{FK08} have used similar techniques to get fpras for a large class of subgraph isomorphism problems. A fundamental constraint in their analysis was that the template subgraphs triangle-free. Thus,
their analysis could not be extended directly to get fpras for $k$-clique, $k$-independent set and $k$-clique cover problems.

\section{ $k^{th}$ moment of a binomial random variable}\label{sec:app-model-definitions}
Consider the binomial random variable $X=\mathrm{binomial}(n,p)$. We are interested in finding the $k^{th}$ moment of $X$, i.e. we want to find $\E[X^k]$. In this section, we give the closed form expression for $\E[X^k]$. 
We evaluate using new equalities obtained from well known binomial theorem. Note that 
\begin{eqnarray*}
\E[X^k]=\sum_{i=0}^{n}i^k\binom{n}{i}p^i(1-p)^{n-i}
\end{eqnarray*}

We start with the most fundamental equality known as binomial theorem given below.
\begin{eqnarray}
\label{eq:app-binom}
(1+x)^n &=& \sum_{i=0}^{n}\binom{n}{i}x^i
\end{eqnarray} 

Suppose we differentiate \Eqn{app-binom} with respect to $x$ and multiply by $x$ subsequently, we get the following equation.
\begin{eqnarray}
\label{eq:app-1st}
nx(1+x)^{n-1} &=& \sum_{i=0}^{n}i\binom{n}{i}x^i
\end{eqnarray}

Note that substituting $x=\frac{p}{1-p}$ in \Eqn{app-1st} and multiplying by $(1-p)^n$, we get $np = \sum_{i=0}^{n}i\binom{n}{i}p^i(1-p)^{n-i}$, which is the first moment of $X$. Suppose we differentiate \Eqn{app-1st} w.r.t. 
$x$ again and multiply by $x$ subsequently, we get
\begin{eqnarray}
\label{eq:app-2nd}
 x(1+x)^{n-1}\Pe{n}{1}+x^2(1+x)^{n-2}\Pe{n}{2}&=&\sum_{i=0}^{n}i^2\binom{n}{i}x^i
\end{eqnarray}

The term $\Pe{n}{i}$ denotes the falling factorial $n\cdot (n-1)\cdot (n-2)\cdots (n-i+1)=\frac{n!}{(n-i)!}$. Again, substituting $x=\frac{p}{1-p}$ in \Eqn{app-2nd} and multiplying $(1-p)^n$, 
we get $\Pe{n}{1}p+\Pe{n}{2}p^2 = \sum_{i=0}^{n}i^2\binom{n}{i}p^i(1-p)^{n-i}=\E[X^2]$. 
The above calculations show an emerging pattern for higher moments, which \Lem{app-binom} illustrates. 
\begin{lemma}
\label{lem:app-binom}
\begin{eqnarray}
\label{eq:app-binomk}
g(x,k)=\sum_{i=0}^{n}i^k\binom{n}{i}x^i=\sum_{j=1}^{k}\lambda_{k,j}x^j(1+x)^{n-j}\Pe{n}{j}
\end{eqnarray}
Here $\lambda_{k,j}$ are the coefficients that depend on $k$ and $j$ but are independent of $n$. Here $0\le j\le k$ $\lambda_{k,0}=\lambda_{k,k+1}=0$.
\end{lemma}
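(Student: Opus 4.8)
The plan is to prove \Lem{app-binom} by induction on $k$, using the differential operator $\theta := x\frac{d}{dx}$. Since $\theta(x^i) = i\,x^i$, this operator satisfies $\theta\, g(x,k) = g(x,k+1)$, so each application of $\theta$ raises the moment index by one. The base case $k=1$ is exactly \Eqn{app-1st}: it gives $g(x,1) = nx(1+x)^{n-1} = \lambda_{1,1}\,x(1+x)^{n-1}\Pe{n}{1}$ with $\lambda_{1,1}=1$ and $\lambda_{1,0}=\lambda_{1,2}=0$, matching the claimed form and boundary conventions.

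For the inductive step, assume \Eqn{app-binomk} holds for $k$ and apply $\theta$ term by term. A direct computation gives
\[
\theta\!\left(x^j(1+x)^{n-j}\right) = j\,x^j(1+x)^{n-j} + (n-j)\,x^{j+1}(1+x)^{n-j-1}.
\]
The crucial identity is the falling-factorial relation $(n-j)\Pe{n}{j} = \Pe{n}{j+1}$, which lets one rewrite
\[
\theta\!\left(\lambda_{k,j}\,x^j(1+x)^{n-j}\Pe{n}{j}\right) = j\lambda_{k,j}\,x^j(1+x)^{n-j}\Pe{n}{j} + \lambda_{k,j}\,x^{j+1}(1+x)^{n-j-1}\Pe{n}{j+1}.
\]
Summing over $j=1,\dots,k$ and shifting the index in the second sum yields $g(x,k+1)=\sum_{j=1}^{k+1}\lambda_{k+1,j}\,x^j(1+x)^{n-j}\Pe{n}{j}$ with the recurrence $\lambda_{k+1,j} = j\,\lambda_{k,j} + \lambda_{k,j-1}$. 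Since this recurrence never mentions $n$, the coefficients $\lambda_{k,j}$ depend only on $k$ and $j$; and since the range of $j$ grows by exactly one at each step, $\lambda_{k,j}=0$ whenever $j\le 0$ or $j\ge k+1$, giving in particular $\lambda_{k,0}=\lambda_{k,k+1}=0$. (One may recognize this as the recurrence for the Stirling numbers of the second kind, $\lambda_{k,j}=S(k,j)$, but this is not needed.)

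The only care required is in the bookkeeping: verifying that the $j=k$ summand is precisely the one producing the new top term at $j=k+1$ (so the stated upper limit is correct), checking that the boundary convention $\lambda_{k,0}=0$ propagates from the base case through the recurrence, and confirming the identity $(n-j)\Pe{n}{j}=\Pe{n}{j+1}$ — the single place where the $n$-dependence is absorbed cleanly into the falling factorial. I do not expect a genuine obstacle. Finally, substituting $x=\frac{p}{1-p}$ and multiplying by $(1-p)^n$, exactly as done for $k=1,2$ in the text, converts \Eqn{app-binomk} into the closed form for $\E[X^k]$.
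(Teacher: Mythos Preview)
Your proposal is correct and follows essentially the same approach as the paper: induction on $k$ with base case \Eqn{app-1st}, applying the operator $x\frac{d}{dx}$ to the inductive hypothesis, using the identity $(n-j)\Pe{n}{j}=\Pe{n}{j+1}$ to absorb the $n$-dependence, and reading off the Stirling-number recurrence $\lambda_{k+1,j}=j\lambda_{k,j}+\lambda_{k,j-1}$. The only cosmetic difference is that you name the operator $\theta$ explicitly, whereas the paper writes out ``differentiate and multiply by $x$'' each time.
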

\begin{proof}
We will prove the above lemma by induction. For $i=1$, this is true as shown in \Eqn{app-1st}. Suppose the lemma is true for $g(x,1),g(x,2),\dots,g(x,k)$. We prove that it holds for $g(x,k+1)$. Differentiating \Eqn{app-binomk} w.r.t. $x$ and subsequently multiplying with $x$ gives
\begin{align}
\label{eq:app-x}
\sum_{i=0}^{n}i^{k+1}\binom{n}{i}x^i &=\sum_{j=1}^{k}\lambda_{k,j}\Pe{n}{j}(jx^j(1+x)^{n-j}+(n-j)x^{j+1}(1+x)^{n-j-1})\nonumber\\
&= \sum_{j=1}^{k}\lambda_{k,j}jx^j(1+x)^{n-j}\Pe{n}{j}+\sum_{j=1}^{k}\lambda_{k,j}x^{j+1}(1+x)^{n-j-1}(n-j)\Pe{n}{j}\nonumber\\
&= \sum_{j=1}^{k}\lambda_{k,j}jx^j(1+x)^{n-j}\Pe{n}{j}+\sum_{j=1}^{k}\lambda_{k,j}x^{j+1}(1+x)^{n-j-1}\Pe{n}{j+1}\nonumber\\
&= \sum_{j=1}^{k+1} (j\lambda_{k,j}+\lambda_{k,j-1})x^j(1+x)^{n-j}\Pe{n}{j}\nonumber\\
&= \sum_{j=1}^{k+1} \lambda_{k+1,j}x^j(1+x)^{n-j}\Pe{n}{j}
\end{align}

Note that the \Eqn{app-x} shows that $\sum_{i=0}^{n}i^{k+1}\binom{n}{i}x^i=\sum_{j=1}^{k+1} \lambda_{k+1,j}x^j(1+x)^{n-j}\Pe{n}{j}$ where $\lambda_{k+1,j}$ follows the recurrence relation 
$$\lambda_{k+1,j}=j\lambda_{k,j}+\lambda_{k,j-1}.$$ 
As given in~\cite{K08a}, Stirling numbers of second kind follow this recurrence. 
\begin{eqnarray}
\label{eq:app-coeff}
\lambda_{k,j}=\frac{1}{j!}\sum_{j=0}^{i}j^k\binom{i}{j}(-1)^j
\end{eqnarray}
\end{proof}
To get the $k^{th}$ moment, we simply substitute $x=\frac{p}{1-p}$ in \Eqn{app-binomk} and multiply by $(1-p)^n$. Hence we have the following theorem.

\begin{theorem}
\label{thm:app-binomk}
$$\E[X^k]= \sum_{j=1}^{k}\lambda_{k,j}p^j \Pe{n}{j}$$
where $\lambda_{k,j}$ are as given in \Eqn{app-coeff}.
\end{theorem}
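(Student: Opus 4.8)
The plan is to derive the theorem as an immediate specialization of \Lem{app-binom}. Recall that \Lem{app-binom} supplies the polynomial identity $g(x,k) = \sum_{i=0}^{n} i^k \binom{n}{i} x^i = \sum_{j=1}^{k} \lambda_{k,j}\, x^j (1+x)^{n-j} \Pe{n}{j}$, valid for every real $x$ and every $k \ge 1$. The point of departure is the formula $\E[X^k] = \sum_{i=0}^{n} i^k \binom{n}{i} p^i (1-p)^{n-i}$ for $X \sim \mathrm{binomial}(n,p)$ recorded at the start of \Sec{app-model-definitions}.

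First I would substitute $x = \tfrac{p}{1-p}$ into the identity of \Lem{app-binom}. Then $1+x = \tfrac{1}{1-p}$, so $(1+x)^{n-j} = (1-p)^{-(n-j)}$, and multiplying both sides by $(1-p)^n$ turns the left-hand side into $\sum_{i=0}^{n} i^k \binom{n}{i} p^i (1-p)^{n-i} = \E[X^k]$, exactly as in the $k=1$ and $k=2$ cases worked out just before the lemma. On the right-hand side, the generic summand becomes $(1-p)^n \cdot \lambda_{k,j} \cdot p^j (1-p)^{-j} \cdot (1-p)^{-(n-j)} \Pe{n}{j} = \lambda_{k,j}\, p^j \Pe{n}{j}$, since the three stray powers of $(1-p)$ cancel entirely. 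Summing over $j = 1, \dots, k$ then gives $\E[X^k] = \sum_{j=1}^{k} \lambda_{k,j}\, p^j \Pe{n}{j}$, which is the claim.

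The remaining point is to confirm that the coefficients $\lambda_{k,j}$ are exactly those in \Eqn{app-coeff}. By the proof of \Lem{app-binom} they obey the recurrence $\lambda_{k+1,j} = j\,\lambda_{k,j} + \lambda_{k,j-1}$ together with $\lambda_{k,0} = \lambda_{k,k+1} = 0$ and $\lambda_{1,1} = 1$; these are precisely the defining recurrence and initial conditions of the Stirling numbers of the second kind, whence $\lambda_{k,j} = S(k,j)$ and \Eqn{app-coeff} follows from the standard inclusion--exclusion expression for $S(k,j)$. I do not expect any genuine obstacle: the whole argument is one substitution plus a short bookkeeping check, and the substantive work has already been done in establishing \Lem{app-binom}. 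The only place warranting a little care is to treat the identity of \Lem{app-binom} as a formal identity of polynomials in $x$, so that evaluation at $x = \tfrac{p}{1-p}$ is legitimate for every $p \in (0,1)$, and to observe that the $j=0$ contribution vanishes (because $\lambda_{k,0} = 0$), which is what allows the final sum to begin at $j = 1$.
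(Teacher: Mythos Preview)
Your proposal is correct and follows essentially the same approach as the paper: the paper's entire argument is the single sentence ``To get the $k^{th}$ moment, we simply substitute $x=\frac{p}{1-p}$ in \Eqn{app-binomk} and multiply by $(1-p)^n$,'' and you have carried out precisely this substitution with the bookkeeping made explicit. Your additional remark identifying the recurrence with that of the Stirling numbers of the second kind is also exactly what the paper notes at the end of the proof of \Lem{app-binom}.
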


\section{Estimators for counting $k$-cliques and $k$-clique covers in random graphs}\label{sec:app-estimators}
In this section, we formally describe our estimators. The estimator for counting cliques in given in \Alg{app-clique}. Note that it embeds the clique $\{v_1,\dots,v_k\}$ and outputs the inverse of probability of embedding it in this way into $G$.
The estimator embeds one vertex at a time until the whole clique is embedded. If the algorithm gets stuck, it outputs $0$. This process can be viewed as decomposing the clique into subgraphs $C_1,C_2,\dots, C_k$, where each $C_i$ is
 the subgraph induced by the $i^{th}$ numbered vertex $v_i$ and its lower numbered neighbors. It is denoted by $v_i$. 

We denote our randomized estimator by $\cA$ and let $X$ be the output estimate. To get an fpras, we need that $\E_\cA[X^2]/(\E_\cA[X])^2$, also called the {\em critical ratio}, is polynomially bounded. 
We will bound a related quantity called {\em critical ratio of averages} given by $\crr(X)=\E_\cG[\E_\cA[X^2]]/(\E_\cG[E_\cA[X]])^2$. Here, the outer expectation is over the graphs of $\cG(n,p)$  and the inner
 expectation is over the coin tosses of the estimator. Our focus in this work will be to get a bound on critical ratio of averages. As shown in \Prop{app-FK08}, this will also give a polynomial bound on the critical ratio itself.
The proof of \Prop{app-FK08} follows from \Cor{app-asymptotic} of \Thm{app-R00} from~\cite{R00}.

Consider any induced subgraph $H_v$ of $H$ with $v$ vertices. Let $e_{H}(v)=\max_{H_v\subseteq H}\{|E(H_v)|\}$ of edge  For stating the results, we need to define the following ratio for the template graph $H$.
$$\gamma=\gamma(H)=\max_{3\le v\le n}\{e_H(v)/(v-2)\}.$$

Note that $\gamma$ is closely related to the largest possible average degree of an induced subgraph of $H$. In our case, this is $(1+o(1))\log n$ for the case of counting cliques and $O(1)$ for counting clique covers. Let $C=C_H(G)$ 
denote the number of copies of $H$ in $G$.

\begin{theorem}[\cite{R00}]\label{thm:app-R00}
Let $H$ be a graph on $n$ vertices and $\gamma$ be as defined above. Let $p$ be a constant. Suppose that the following conditions hold: $p\cdot \binom{n}{2}\rightarrow \infty, \sqrt{n}(1-p)\rightarrow \infty$ and $np^{\gamma}/\Delta^4\rightarrow \infty$. Then, with high probability, 
a random graph $G \in  \mG\left(n, p\cdot\binom{n}{2}\right)$ has a spanning subgraph isomorphic to $H$. In general,
$C = C_H(G)$ satisfies 
$$\frac{\E[C^2]}{\E[C]^2}= 1 + o(1).$$
\end{theorem}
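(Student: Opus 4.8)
The plan is to run a second-moment argument in the $\mG(n,M)$ model with $M=p\binom n2$, following Riordan's approach in~\cite{R00}. Write $m=|E(H)|$ and let $N=n!/|\mathrm{aut}(H)|$ be the number of copies of $H$ in $K_n$, so that $C=C_H(G)$ counts the copies all of whose edges fall inside $G$. First I would record the first moment: a fixed copy survives in $G$ with probability $\prod_{i=0}^{m-1}(M-i)/(\binom n2-i)=(1+o(1))p^{m}$ by standard estimates valid under the hypotheses (in particular $M\to\infty$), so $\E[C]=(1+o(1))Np^{m}$, and the hypotheses also ensure $\E[C]\to\infty$. For the second moment, write $\E[C^2]=\sum_{H_1,H_2}\Pr[E(H_1)\cup E(H_2)\subseteq G]$, use $|E(H_1)\cup E(H_2)|=2m-|E(H_1)\cap E(H_2)|$ together with the same product estimate, divide by $\E[C]^2$, and exploit symmetry in the choice of $H_1$; this collapses the ratio to
\[
\frac{\E[C^2]}{\E[C]^2}=(1+o(1))\,\frac1N\sum_{H_2}p^{-|E(H_1)\cap E(H_2)|},
\]
for an arbitrary fixed copy $H_1$, with $H_2$ ranging over all $N$ copies.

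Next I would isolate the easy parts: the term $H_2=H_1$ contributes $p^{-m}/N=(1+o(1))/\E[C]=o(1)$, and the copies $H_2$ edge-disjoint from $H_1$ contribute at most $1$ in total, so the whole problem reduces to showing that the copies $H_2$ with $1\le f:=|E(H_1)\cap E(H_2)|<m$ contribute $o(1)$. This is where $\gamma$ enters. Group these copies by their common subgraph $W=H_1\cap H_2$, with $v=|V(W)|$ vertices and $f$ edges. Every vertex of $W$ has degree $\ge1$ in $W$, so $v\le 2f$; and the $v$ vertices of $W$ induce in $H_1\cong H$ a subgraph with at most $e_H(v)$ edges, so $f\le e_H(v)\le\gamma(v-2)$, i.e.\ $v\ge 2+f/\gamma$. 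A copy of $W$ inside $H_1$ can be selected in at most $\binom mf$ ways (the maximum degree $\Delta$ of $H$ controlling how such a subgraph can be placed and extended), and once the $v\ge 2+f/\gamma$ vertices carrying $W$ are fixed, the bijection defining $H_2$ is determined on those coordinates, leaving at most $(n-v)!\le(n-2-f/\gamma)!$ completions. Multiplying the resulting count by the amplifying factor $p^{-f}$ and summing over $f\ge1$ yields a geometric-type series in $f$; each of its terms, and hence the whole sum, is $o(1)$ precisely because $np^{\gamma}/\Delta^{4}\to\infty$ (the powers of $\Delta$ arising from the over-counting of subgraph placements). The hypothesis $\sqrt n(1-p)\to\infty$ is used only to keep the $\mG(n,M)$ vs.\ $\cG(n,p)$ corrections in the two moment estimates negligible when $p$ is close to $1$. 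Altogether $\E[C^2]/\E[C]^2=1+o(1)$, whence $\Pr[C=0]\le \E[C^2]/\E[C]^2-1=o(1)$ by Chebyshev's inequality, so $G$ has a spanning subgraph isomorphic to $H$ with high probability.

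I expect the third step — the uniform bound on the number of pairs of copies sharing an $f$-edge subgraph — to be the main obstacle. One has essentially no control over the shape of the shared subgraph beyond the extremal ratio $\gamma$ and the degree bound $\Delta$, and the entire argument hinges on arranging the counting estimates so that, even after multiplying by $p^{-f}$, the contributions still form a convergent series summing to $o(1)$; balancing the exponents of $n$ and of $\Delta$ — which is exactly why the hypothesis is phrased as $np^{\gamma}/\Delta^{4}\to\infty$ — is where all the work lies. Everything else (the two moment computations, the transfer between $\mG(n,M)$ and $\cG(n,p)$, and the concluding Chebyshev step) is routine.
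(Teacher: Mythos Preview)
The paper does not prove this theorem at all: it is quoted verbatim from Riordan~\cite{R00} and used as a black box (the only accompanying text is the ``Remarks'' paragraph checking that the hypotheses are met for the specific template graphs under consideration, followed by \Cor{app-R00} and \Cor{app-asymptotic}). So there is no in-paper proof to compare your proposal against.

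That said, what you have sketched is indeed the skeleton of Riordan's original second-moment argument, so if your goal were to supply a self-contained proof rather than to reproduce the paper, you are on the right track. Your own diagnosis of the difficulty is accurate: the crux is the combinatorial bound on the number of ordered pairs $(H_1,H_2)$ whose intersection has $v$ vertices and $f$ edges, and your placeholder ``at most $\binom{m}{f}$ ways, with $\Delta$ controlling the placement'' is not yet a proof. In Riordan's argument the count is handled component by component in the overlap graph, and one needs a careful bound of the form (roughly) that a connected subgraph of $H$ with $v$ vertices and $f$ edges can be embedded in $H$ in at most $n\Delta^{2(v-1)}$ ways, with an additional $\Delta^{2}$ paid per extra component; summing over the partition into components is what produces the $\Delta^{4}$ in the hypothesis. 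Without making that step precise your series in $f$ does not visibly converge. The remaining ingredients you list---the moment computations, the $\mG(n,M)$ estimates, and the Chebyshev conclusion---are, as you say, routine.
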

\emph{Remarks.} Note that \Thm{app-R00} holds for the spanning subgraphs of the random graphs. This assumption can easily be incorporated while embedding a single clique at any step. While embedding each clique, 
$H$ is considered to be the $n$ vertex graph which is the disjoint union of a clique and the isolated vertices in both the cases. 
Also, note that $np^\gamma/\Delta^4\rightarrow \infty$ since $\gamma$ and $\Delta$ are both bounded by $(1+o(1))\log n$. Therefore, all conditions of \Thm{app-R00} are satisfied in our case. So we get the following corollary in our case.

\begin{corollary}\label{cor:app-R00}
Let $G\in \mG(n,\Omega(n^2))$ and $H$ be one of the following graphs 
$$\text{(a) a clique of size $(1+o(1))\lp n$ or (b) a cover of cliques of constant size,}$$
Then $\frac{\E[C^2]}{\E[C]^2}= 1 + o(1)$, where $C$ denotes the number of copies of $H$ in $G$.
\end{corollary}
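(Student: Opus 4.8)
The plan is to obtain \Cor{app-R00} by instantiating \Thm{app-R00}, following the Remark that accompanies it. In case (a) we take $H$ to be the $n$-vertex graph consisting of a clique on $k=(1+o(1))\lp n$ vertices together with $n-k$ isolated vertices; in case (b) we take $H$ to be the vertex-disjoint union of $n/k$ copies of a $k$-clique with $k=O(1)$. In both cases we apply the theorem with $m=p\binom{n}{2}$, which is $\Theta(n^2)=\Omega(n^2)$, so that $G\in\mG(n,\Omega(n^2))$ is exactly a graph of the form the theorem treats (replacing $m$ by $\lfloor m\rfloor$ changes nothing asymptotically). It then remains only to verify the three hypotheses $p\binom{n}{2}\to\infty$, $\sqrt{n}(1-p)\to\infty$, and $np^{\gamma}/\Delta^4\to\infty$, where $\gamma=\gamma(H)$ and $\Delta=\Delta(H)$ is the maximum degree. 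The first two are immediate because $p\in(0,1)$ is a fixed constant, so $p\binom{n}{2}=\Theta(n^2)\to\infty$ and $1-p$ is a positive constant; all the substance lies in estimating $\gamma$ and $\Delta$ and checking the third condition.

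For case (a), $\Delta=k-1=(1+o(1))\lp n$, hence $\Delta^4=n^{o(1)}$. To compute $\gamma$ I would observe that the densest $v$-vertex induced subgraph of $H$ is a sub-clique $K_v$ when $v\le k$ and the whole $K_k$ when $v>k$, so $e_H(v)=\binom{v}{2}$ for $v\le k$ and $e_H(v)=\binom{k}{2}$ for $v>k$. Maximizing $\binom{v}{2}/(v-2)$ over $v\le k$ (it is nondecreasing for $v\ge 4$) and $\binom{k}{2}/(v-2)$ over $v>k$ (decreasing) gives $\gamma=\binom{k}{2}/(k-2)=\frac{k}{2}\cdot\frac{k-1}{k-2}=(1+o(1))\frac{1}{2}\lp n$ — essentially half the maximum average degree, as the text anticipates. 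Using the identity $p^{\lp n}=1/n$ this yields $p^{\gamma}=n^{-(1+o(1))/2}$, hence $np^{\gamma}=n^{1/2-o(1)}\to\infty$, and dividing by $\Delta^4=n^{o(1)}$ preserves divergence.

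For case (b), $k=O(1)$ gives $\Delta=k-1=O(1)$ and $\Delta^4=O(1)$. A $v$-vertex induced subgraph of $H$ has $O(v)$ edges (at most $(v/k+1)\binom{k}{2}$, obtained by packing in as many complete $k$-cliques as fit), with the implied constant depending only on $k$; hence $e_H(v)/(v-2)=O(1)$ uniformly in $v$, so $\gamma=O(1)$ (indeed $\gamma=\binom{k}{2}/(k-2)$, maximized at $v=k$, with $\gamma=1$ in the degenerate matching case $k=2$). Thus $p^{\gamma}$ is a positive constant and $np^{\gamma}/\Delta^4=\Omega(n)\to\infty$. In both cases all three hypotheses of \Thm{app-R00} hold, so the theorem delivers $\E[C^2]/\E[C]^2=1+o(1)$, which is the assertion of \Cor{app-R00}.

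The only delicate point I anticipate is the computation of $\gamma$ in case (a): one must recognize that $\gamma\sim k/2$ rather than $\gamma\sim k$, since the crude bound $\gamma\le k$ would only give $np^{\gamma}\ge n^{-o(1)}$, which need not tend to infinity. The two ingredients that make the argument clean are the case split ($v\le k$ versus $v>k$) in identifying the densest induced subgraph and the identity $p^{\lp n}=1/n$; everything else is routine bookkeeping.
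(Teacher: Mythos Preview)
Your proof is correct and follows exactly the approach the paper takes: verify the three hypotheses of \Thm{app-R00} for each of the two template graphs, using the padding by isolated vertices described in the Remarks. Your computation that $\gamma\sim k/2$ in case (a) is in fact sharper than the paper's own remark (which merely asserts that $\gamma$ and $\Delta$ are bounded by $(1+o(1))\lp n$); as you correctly point out, the crude bound $\gamma\le k$ would not by itself force $np^{\gamma}/\Delta^4\to\infty$, so your extra care here is warranted.
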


From the asymptotic equivalence between $\cG(n,p)$ and $\mG(n,m)$ (see e.g.~\cite{JLR00,R94}), we have the following corollary.

\begin{corollary}\label{cor:app-asymptotic}
Let $G\in \cG(n,p)$ and $H$ be one of the following graphs 
$$\text{(a) a clique of size $(1+o(1))\lp n$ or (b) a cover of cliques of constant size,}$$
Then $C\ge \E[C]/\omega$, where $\omega=\omega(n)$  be a real valued function that goes to $\infty$ as $n\rightarrow \infty$.
\end{corollary}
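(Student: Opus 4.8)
The plan is to derive the corollary from the second-moment bound of \Cor{app-R00} by transferring it from $\mG(n,m)$ to $\cG(n,p)$, using monotonicity to make the transfer painless. The structural observation is that, for every fixed threshold $t$, the event $\{C\ge t\}$ is an \emph{increasing} graph property: adding an edge to $G$ destroys neither a clique nor a clique cover, so $C$ is nondecreasing in the edge set. Set $N=\binom n2$ and fix $m_-=\lceil\,pN-\sqrt n\,\log n\,\rceil$; since $p$ is constant, $m_-=\Theta(n^2)$, so \Cor{app-R00} applies in $\mG(n,m_-)$ and gives $\mathrm{Var}_{m_-}(C)/\E_{m_-}[C]^2=o(1)$. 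Chebyshev's inequality then yields
\[
\Pr_{m_-}\!\left[\,C<\tfrac12\,\E_{m_-}[C]\,\right]\ \le\ 4\,\frac{\mathrm{Var}_{m_-}(C)}{\E_{m_-}[C]^2}\ =\ o(1),
\]
so $C\ge\tfrac12\,\E_{m_-}[C]$ with high probability in $\mG(n,m_-)$.

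I would then push this up to $\cG(n,p)$. Since $\{C\ge\tfrac12\E_{m_-}[C]\}$ is increasing and $\mG(n,m')$ can be coupled to contain $\mG(n,m_-)$ for every $m'\ge m_-$, the bound of the previous step holds whp in $\mG(n,m')$ for every $m'\ge m_-$. For $G\in\cG(n,p)$ the number of edges is $\mathrm{Binomial}(N,p)$, hence exceeds $m_-$ with probability $1-o(1)$; conditioning on the edge count, $G$ is distributed as some $\mG(n,m')$ with $m'\ge m_-$, so $C\ge\tfrac12\,\E_{m_-}[C]$ whp in $\cG(n,p)$. To finish I compare expectations: for a $k$-clique $H$, $\E_{m_-}[C]=\binom nk\binom{N-\binom k2}{m_--\binom k2}\big/\binom{N}{m_-}$ and $\E_p[C]=\binom nk p^{\binom k2}$, so $\E_{m_-}[C]/\E_p[C]=(1+o(1))\bigl(m_-/(pN)\bigr)^{\binom k2}$; as $m_-/(pN)=1-\Theta(\log n/n)$ and $\binom k2=O(\log^2 n)$, this is $\exp(-\Theta(\log^3 n/n))=1-o(1)$ (the analogous estimate for a cover of constant-size cliques is trivial, as then $\binom k2=O(1)$). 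Combining the steps, $C\ge\tfrac12(1-o(1))\,\E_p[C]\ge\E_p[C]/\omega$ whp for every $\omega=\omega(n)\to\infty$.

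The one delicate point is the last comparison of expectations: because the template has $\binom k2=\Theta(\log^2 n)$ edges, a $1\pm o(1)$ perturbation of the normalized density $m/N$ gets raised to the power $\Theta(\log^2 n)$ and could a priori be unbounded. The resolution is quantitative — the fluctuation scale $\sqrt n\,\log n$ of the edge count of $\cG(n,p)$ is $o(N/\log^2 n)$ relative to $pN$, so $(m_-/(pN))^{\binom k2}\to1$ — and routing through the monotone property is what lets us invoke \Cor{app-R00} only along the single sequence $m=m_-$ rather than needing its $o(1)$ uniformly over all $m=\Theta(n^2)$. The remaining ingredients (the exact expected count of a disjoint union of cliques in $\mG(n,m_-)$ and the coupling $\mG(n,m_-)\subseteq\mG(n,m')$) are standard.
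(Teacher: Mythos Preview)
Your route---Chebyshev in $\mG(n,m_-)$, then a monotone coupling up to $\cG(n,p)$, then a comparison of expectations---is exactly the content behind the paper's one-line appeal to the asymptotic equivalence of $\cG(n,p)$ and $\mG(n,m)$. So the strategy is right and is a faithful unpacking of the paper's citation.

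There are, however, two genuine slips.

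\emph{The threshold $m_-$ is set too high.} You take $m_-=\lceil pN-\sqrt n\,\log n\rceil$, but the edge count of $\cG(n,p)$ is $\mathrm{Binomial}(N,p)$ with standard deviation $\sqrt{Np(1-p)}=\Theta(n)$, not $\Theta(\sqrt n)$. With your $m_-$ the deficit $pN-m_-=\sqrt n\,\log n=o(n)$ is well inside one standard deviation, so $\Pr[\,|E(G)|\ge m_-\,]\to\tfrac12$, not $1-o(1)$. (Your later line ``$m_-/(pN)=1-\Theta(\log n/n)$'' is inconsistent with your definition and would correspond to a deficit of order $n\log n$, which \emph{is} large enough; so you may simply have miswritten the formula.) Take $m_-=pN-\omega_1 n$ for some $\omega_1\to\infty$; then everything up to the expectation comparison goes through.

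\emph{The clique-cover template has $\Theta(n)$ edges, not $O(1)$.} In case~(b), $H$ is a spanning graph: a disjoint union of $n/k$ copies of $K_k$, hence $e(H)=\tfrac{n}{k}\binom{k}{2}=\Theta(n)$. Your parenthetical ``trivial, as then $\binom{k}{2}=O(1)$'' is therefore wrong. Redoing the comparison with $e=\Theta(n)$ and $m_-=pN-\omega_1 n$ gives
\[
\log\frac{\E_{m_-}[C]}{\E_p[C]}
=\sum_{j=0}^{e-1}\log\frac{m_--j}{p(N-j)}
\;\longrightarrow\; -\,\frac{(k-1)}{p}\,\omega_1\;-\;\frac{(1-p)(k-1)^2}{4p},
\]
so the ratio is $\exp(-\Theta(\omega_1))$, not $1-o(1)$. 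This still suffices for the corollary: given any $\omega\to\infty$, choose $\omega_1\to\infty$ slowly enough that $2\exp(c\,\omega_1)\le\omega$; then $C\ge\tfrac12\,\E_{m_-}[C]\ge\E_p[C]/\omega$ whp. But the reasoning you wrote does not establish it.

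With these two corrections your argument is complete and matches the paper's intended (but unstated) proof.
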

\Thm{app-R00} along with \Cor{app-R00} and \Cor{app-asymptotic} yield the following proposition. The proof is identical to the one given for a similar proposition in~\cite{FK08}, but we give it make the write-up self contained.

\begin{proposition}\label{prop:app-FK08}
Let $G\in \cG(n,p)$ and $H$ be one of the following graphs 
$$\text{(a) a clique of size $(1+o(1))\lp n$ or (b) a cover of cliques of constant size.}$$ Let X be the output of Algorithm Embeddings, and let $p$ be a constant. Then, for a random graph $G\in \cG(n,p)$ 
the critical ratio satisfies $\frac{\E[X^2]}{(\E[X])^2}\le \omega^3 \frac{\E_G[\E_\cA[X^2]]}{(\E_\cG[\E_\cA[X]])^2}$, where $\omega=\omega(n)$ such that $\omega\rightarrow \infty$ as $n\rightarrow \infty$.
\end{proposition}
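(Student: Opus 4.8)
The plan is to relate the ``true'' critical ratio $\E_{\cA}[X^2]/(\E_{\cA}[X])^2$, which is a random quantity depending on the graph $G$, to the critical ratio of averages $\crr(X) = \E_\cG[\E_\cA[X^2]]/(\E_\cG[\E_\cA[X]])^2$. The numerator poses no problem: by linearity of expectation (taking $\E_\cG$ of the inner second moment), the expected value over $G$ of $\E_\cA[X^2]$ is exactly the numerator of $\crr(X)$, so any bound on $\E_\cA[X^2]$ in terms of $\crr(X)$-type quantities, holding with high probability, is immediate from Markov's inequality up to a factor of $\omega$. The delicate part is the denominator: we need to show that $(\E_\cA[X])^2$, which equals $C^2$ since $X$ is an unbiased estimator of $C = C_H(G)$, is \emph{not too small} with high probability, i.e.\ that $C$ does not fall far below its mean $\E_\cG[C]$.

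The key input for this is \Cor{app-asymptotic}, which states exactly that for the two families of template graphs $H$ under consideration, $C \ge \E_\cG[C]/\omega$ with high probability (this is derived from \Thm{app-R00}, whose hypotheses we have already verified hold in our regime because $\gamma$ and $\Delta$ are both $(1+o(1))\log n$, so $np^\gamma/\Delta^4 \to \infty$, and $p$ is a constant so the other two conditions are trivially met). Squaring, $(\E_\cA[X])^2 = C^2 \ge \E_\cG[C]^2/\omega^2 = (\E_\cG[\E_\cA[X]])^2/\omega^2$ with high probability, since again $\E_\cG[\E_\cA[X]] = \E_\cG[C]$ by unbiasedness.

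Putting the pieces together: condition on the high-probability event $\cE$ on which $C \ge \E_\cG[C]/\omega$. On this event we have the deterministic bound
\begin{equation*}
\frac{\E_\cA[X^2]}{(\E_\cA[X])^2} = \frac{\E_\cA[X^2]}{C^2} \le \omega^2 \cdot \frac{\E_\cA[X^2]}{(\E_\cG[\E_\cA[X]])^2}.
\end{equation*}
It remains to control $\E_\cA[X^2]$ itself on $\cE$. Apply Markov's inequality to the nonnegative random variable $\E_\cA[X^2]$ (a function of $G$): with probability $\ge 1 - 1/\omega$ over $G$, we have $\E_\cA[X^2] \le \omega \cdot \E_\cG[\E_\cA[X^2]]$. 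Intersecting this event with $\cE$ still gives a high-probability event (union bound: the failure probability is at most $o(1) + 1/\omega = o(1)$), and on the intersection
\begin{equation*}
\frac{\E_\cA[X^2]}{(\E_\cA[X])^2} \le \omega^2 \cdot \frac{\omega \, \E_\cG[\E_\cA[X^2]]}{(\E_\cG[\E_\cA[X]])^2} = \omega^3 \cdot \frac{\E_\cG[\E_\cA[X^2]]}{(\E_\cG[\E_\cA[X]])^2},
\end{equation*}
which is the claimed inequality.

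The main obstacle, and the only nontrivial ingredient, is the lower-tail control on $C$ encapsulated in \Cor{app-asymptotic}/\Thm{app-R00}; everything else is Markov's inequality, unbiasedness of the Rasmussen-type estimator, and a union bound. Since we are permitted to invoke \Cor{app-asymptotic}, the argument is essentially a bookkeeping exercise in tracking the three $\omega$ factors — one from the lower tail of $C$ squared contributing $\omega^2$, and one from the Markov bound on the second moment contributing $\omega$ — and in checking that the two bad events each have vanishing probability so their union does too.
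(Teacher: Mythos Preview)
Your proof is correct and follows essentially the same approach as the paper's own proof: invoke unbiasedness ($\E_\cA[X]=C$), use \Cor{app-asymptotic} for the lower-tail bound $C\ge \E_\cG[C]/\omega$ to control the denominator (contributing $\omega^2$), and Markov's inequality on $\E_\cA[X^2]$ to control the numerator (contributing the remaining $\omega$), then combine via a union bound. Your write-up is in fact more explicit than the paper's, which is terse and contains a typographical sign error in stating the lower-tail bound.
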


\begin{proof}
For the unbiasted estimator $\cA$, we have $C=\E_\cA[X]$. Therefore, from \Cor{app-asymptotic}, we have that $C=\E_\cA[X]\le \E_\cG[\E_\cA[X]]/\omega$ with high probability.  Also, from Markov's inequality we have $\Pr[\E_\cA[X^2]> \omega \E_G[\E_\cA[X^2]]]\le 1/\omega$.
Therefore with probability at least $1-1/\omega$, we have $\E_\cA[X^2]\le \omega \E_G[\E_\cA[X^2]]$. Our result follows from these inequalities.
\end{proof}

In the rest of the paper, we focus on bounding the critical ratio of averages. The estimator for counting $k$-cliques is given in \Alg{app-clique}. It embeds one clique of size $k=(1+o(1))\lp n$ in $G$ and outputs the inverse of probability of embedding. This is done by the procedure \textproc{Embed-Clique}, 
which is called only once in this case.

\begin{algorithm}
\caption{Count-cliques($G,k$)}\label{alg:app-clique}
\begin{algorithmic}[1]	
\Procedure{Embed-Clique}{$G,k$}
\State $i\gets 0$ \Comment{$i$ denotes the number of nodes already embedded in $G$}
\State $v_1\gets \mathrm{ArbitraryNode}(G)$ \Comment{Arbitrarily assign a node from $G$ to $v_0$}
\While{$i < k$}
\State $\cN_i\gets \mathrm{CommonNeighbors}(\{v_1,\dots,v_i\})$
\If {$\cN_i=\emptyset$} 
	\State $X\gets 0$ \Comment{Embedding algorithm has failed; so terminate}
\EndIf\State $X_i\gets |\cN_i|$
\State $v_{i+1}\gets \mathrm{RandomNode}(\cN_i)$\Comment{uniformly randomly assign a node from $\cN_i$ to $v_{i+1}$}
\State $X\gets X\cdot X_i$
\State $i\gets i+1$
\EndWhile\label{euclidendwhile}
\State \textbf{return} $X/(k!)$\Comment{Estimator outputs unbiased estimate of number of $k$-Cliques}\label{step:1last}
\EndProcedure
\end{algorithmic}
\end{algorithm}

The estimator for counting $k$-clique covers of $G$ is given in \Alg{app-clique-cover}. It uses the procedure \textproc{Embed-Clique} described in \Alg{app-clique} to embed each $k$-clique in the cover. This process is sequentially repeated until all the vertices are covered. 
In the end, it returns the inverse of probability of finding the cover, if successful. Note that this is the product of the probabilities of embedding the individual cliques in the cover. 
\begin{algorithm}
\caption{Count-clique-covers($G,k$)}\label{alg:app-clique-cover}
\begin{algorithmic}[1]
\State $G_{res}\gets G$
\State $a\gets (k!)^{\frac{n}{k}}\cdot(\frac{n}{k})!$\Comment{Size of the automorphism group of $k$-clique cover}
\State $X\gets 1$
\While{$G_{res}\neq \emptyset$}
\State $X\gets X\cdot  $\textproc{Embed-Clique($G_{res},k$)}
\If {\textproc{Embed-Clique($G_{res},k$) $= 0$}} 
	\State $X\gets 0$ \Comment{Embedding algorithm has failed; so terminate}
\EndIf
\State $G_{res}\gets G\setminus \{v_1,\dots v_k\}$\Comment{Remove the currently embedded clique $\{v_1,v_2,\dots, v_k\}$ from $G$ to get $G_{res}$}
\EndWhile
\State \textbf{return} $X/a$
\end{algorithmic}
\end{algorithm}

\section{Analysis of estimator for counting cliques and clique-covers in random graphs}
In this section, we show a polynomial bound on the critical ratio of averages for the estimators in \Alg{app-clique} and \Alg{app-clique-cover}. Note that from \Prop{app-FK08}, this is sufficient to bound the critical ratio of the estimator and hence get an fpras for 
counting $k$-cliques (for $k=(1+o(1))\log n$) and $k$-clique covers (for $k=O(1)$) in random graphs.



\subsection{Counting Cliques}\label{sec:app-clique}
In this section, we prove \Thm{app-clique}. In this case, the estimator embeds a single clique onto the base graph and outputs the inverse of probability of embedding the same. Let $X$, the random variable denoting the count,
 be the output of the estimator. The estimator selects first vertex in the graph arbitrarily and embeds 
one edge at a time until the whole clique is embedded. It outputs the inverse of probability of embedding if it goes through, else it outputs $0$. 

Let $X_j$ corresponds to the number of ways to embed vertex $j$ in the residual graph. Note that $X=X_{1}\cdot X_{2}\cdots X_{k}$. Now consider the term $\crr(X)=\E_\cG[\E_\cA[X^2]/\E_\cG[\E_\cA[X]]]^2$. 

To estimate the critical ratio of averages, we need the definition of $k$-nesting, denoted by $N(k, n, p)$, as follows.
\begin{definition}[$k$-nesting]\label{def:app-k-nesting}
A $k$-nesting is a function $N(k,n,p)$ that can be evaluated in the following recursive way. 
\begin{enumerate}[(i)]
\item The $2$-nesting is defined as  $$N(2,n,p)=n^2\left(\sum_{i=1}^{n-1}i^2\binom{n-1}{i}p^i(1-p)^{n-1-i}\right)$$
\item The $k$-nesting is defined as $$N(k,n,p)=n^2\left(\sum_{i=k-1}^{n-1}N(k-1,i ,p)\binom{n-1}{i}p^i(1-p)^{n-1-i}\right)$$
\end{enumerate}
\end{definition}

Note that the embedding of a $k$-clique can be thought of as embedding $i^{th}$-vertex to get an $i$-clique from $i-1$-clique for each $i\in \{1,2,\dots,k\}$. So, we have the following observation.

\begin{observation}\label{obs:app-variance} 
\begin{eqnarray}\label{eq:app-num-term}
\E_\cG[\E_\cA[X_{1}^2X_{2}^2\cdots X_{k}^2]]= N(k, n, p)
\end{eqnarray}
\end{observation}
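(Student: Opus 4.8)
The plan is to prove the identity by induction on $k$, peeling off the contribution of the first embedded vertex and exploiting the self-reducibility of $\cG(n,p)$: once we condition on the common-neighbourhood that the first vertex determines, the graph induced on that neighbourhood is again a fresh random graph, so the tail of the estimator is the \emph{same} procedure run on a smaller random instance. Throughout I take the convention that when \textproc{Embed-Clique} fails (some $\cN_j=\emptyset$) the product $X_1^2\cdots X_k^2$ equals $0$, so the claimed equality also covers unsuccessful runs, and correspondingly I adopt $N(j,m,p)=0$ whenever $m<j$, which is consistent with \Def{app-k-nesting} since the defining sum is empty in that range.

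For the base case $k=2$ the estimator sets $X_1=n$ (the first vertex is chosen uniformly, with selection probability $1/n$) and $X_2=|\cN_1|=\deg_G(v_1)$. For any fixed $v_1$ the degree $\deg_G(v_1)$ is distributed as $\mathrm{Binomial}(n-1,p)$, hence
$$\E_\cG[\E_\cA[X_1^2X_2^2]] = n^2\,\E\!\left[\deg_G(v_1)^2\right] = n^2\sum_{i=0}^{n-1} i^2\binom{n-1}{i}p^i(1-p)^{n-1-i} = N(2,n,p),$$
the $i=0$ term vanishing. For the inductive step, assume the claim for $k-1$ with every value $m$ in place of $n$. In the run on $G\in\cG(n,p)$ we again have $X_1=n$, so $X_1^2$ factors out as $n^2$. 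Condition on the first vertex $v_1$ and on the set $S=\cN_1=N_G(v_1)$, and write $i=|S|$. The event $\{N_G(v_1)=S\}$ depends only on the $n-1$ potential edges incident to $v_1$, hence is independent of the $\binom{i}{2}$ potential edges inside $S$; therefore, conditionally, $G[S]\sim\cG(i,p)$ and $\Pr[\,|N_G(v_1)|=i\,]=\binom{n-1}{i}p^i(1-p)^{n-1-i}$. Moreover, for every $j\ge 2$ we have $\cN_j=N_G(v_1)\cap\cdots\cap N_G(v_j)=S\cap\bigl(N_G(v_2)\cap\cdots\cap N_G(v_j)\bigr)$, which is exactly the common neighbourhood of $\{v_2,\dots,v_j\}$ inside $G[S]$; consequently the continuation of the estimator from the second vertex onward is literally \textproc{Embed-Clique} applied to $G[S]$ with parameter $k-1$, and the tuple $(X_2,X_3,\dots,X_k)$ of the original run coincides with the tuple $(X_1',X_2',\dots,X_{k-1}')$ of that sub-run (in particular $X_2=|S|=i$). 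Applying the induction hypothesis to $\cG(i,p)$ gives $\E\bigl[\E_\cA[X_2^2\cdots X_k^2]\ \big|\ |N_G(v_1)|=i\bigr]=N(k-1,i,p)$, and summing over $i$,
$$\E_\cG[\E_\cA[X_1^2\cdots X_k^2]] = n^2\sum_{i=0}^{n-1}\binom{n-1}{i}p^i(1-p)^{n-1-i}\,N(k-1,i,p) = N(k,n,p),$$
where the last step discards the vanishing terms $i<k-1$ to match the summation range in \Def{app-k-nesting}.

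The only non-routine point — and hence the main obstacle — is the conditional-distribution step: the claim that conditioning on $N_G(v_1)$ leaves the subgraph induced on that neighbourhood distributed \emph{exactly} as $\cG(|N_G(v_1)|,p)$, together with the observation that successive common neighbourhoods $\cN_j$ inside $G$ are precisely the neighbourhoods computed by the same algorithm inside $G[S]$, so that the tail of the execution is an honest recursive instance. Once that is in place, everything else (the factoring of $X_1^2$ into $n^2$, the binomial weights for $|N_G(v_1)|$, and the bookkeeping for failed runs, which forces the product to $0$ in agreement with $N(\cdot,\cdot,p)$ vanishing below the diagonal) is straightforward.
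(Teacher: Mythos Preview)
Your proof is correct and follows exactly the idea the paper relies on: the vertex-by-vertex embedding of the clique is self-reducing (condition on the first vertex and its neighbourhood, then recurse inside $G[N_G(v_1)]\sim\cG(i,p)$), and this recursion matches the recursive definition of $N(k,n,p)$. The paper states this as an immediate observation without a formal argument, so your explicit induction---including the conditional-independence step for $G[S]$ and the bookkeeping for failed runs---simply makes rigorous what the paper leaves to the reader.
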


\Lem{app-num-term} shows the exact structure of $N(k,\ell,p)$,  which we use in getting the bound on the critical ratio. 
\begin{lemma}
\label{lem:app-num-term}
$$N(k,\ell, p)=\sum_{j=k}^{2k-1}\ell\Pe{\ell}{j}f_{k,j}(p)$$
Here $f_{k,j}(p)$ is a function in $k,j, p$ that is independent of $\ell$ with the following properties. 
\begin{enumerate}[(i)]
\item $f_{k,k-i}(p)=0$ for all $i\in \{1,\dots,k\}$ and $f_{k,2k+i}(p)=0$ for all $i\ge 0$.
\item $f_{k+1,j}(p)=p^{j-1}\left((j-1)f_{k,j-1}(p)+f_{k,j-2}(p)\right)$.
\end{enumerate}
\end{lemma}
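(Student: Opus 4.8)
The plan is to prove \Lem{app-num-term} by induction on $k$, substituting the closed form for $N(k-1,\cdot,p)$ into the recursive definition of $N(k,\cdot,p)$ and evaluating the resulting binomial sum. The one external fact I need is the factorial-moment identity
$$\sum_{i=0}^{N}\Pe{i}{m}\binom{N}{i}p^i(1-p)^{N-i}=\Pe{N}{m}p^m,$$
obtained in the spirit of \Sec{app-model-definitions}: differentiate \Eqn{app-binom} $m$ times and multiply by $x^m$ to get $x^m\Pe{N}{m}(1+x)^{N-m}=\sum_i\Pe{i}{m}\binom{N}{i}x^i$, then put $x=\frac{p}{1-p}$ and multiply by $(1-p)^N$. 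For the base case $k=2$, note $N(2,\ell,p)=\ell^2\,\E[Y^2]$ with $Y\sim\mathrm{binomial}(\ell-1,p)$; writing $Y^2=Y(Y-1)+Y$ and applying the identity with $m=2$ and $m=1$ gives $N(2,\ell,p)=\ell\Pe{\ell}{3}p^2+\ell\Pe{\ell}{2}p$, so one takes $f_{2,2}(p)=p$, $f_{2,3}(p)=p^2$, and $f_{2,j}(p)=0$ otherwise, which is the claimed form together with property (i) at $k=2$.

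For the inductive step, assume $N(k,\ell,p)=\sum_{j=k}^{2k-1}\ell\Pe{\ell}{j}f_{k,j}(p)$. Substituting into $N(k+1,\ell,p)=\ell^2\sum_i N(k,i,p)\binom{\ell-1}{i}p^i(1-p)^{\ell-1-i}$ and interchanging the two summations (harmless, and one may extend the inner sum down to $i=0$ since $\Pe{i}{j}=0$ whenever $i<j$) gives
$$N(k+1,\ell,p)=\ell^2\sum_{j=k}^{2k-1}f_{k,j}(p)\sum_{i}i\,\Pe{i}{j}\binom{\ell-1}{i}p^i(1-p)^{\ell-1-i}.$$
I would then use the splitting $i\,\Pe{i}{j}=\Pe{i}{j+1}+j\,\Pe{i}{j}$ (from $i=(i-j)+j$) and the factorial-moment identity with $N=\ell-1$ to evaluate the inner sum as $\Pe{\ell-1}{j+1}p^{j+1}+j\,\Pe{\ell-1}{j}p^j$. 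Since $\Pe{\ell-1}{m}=\Pe{\ell}{m+1}/\ell$, one power of $\ell$ cancels against $\ell^2$, leaving
$$N(k+1,\ell,p)=\ell\sum_{j=k}^{2k-1}f_{k,j}(p)\bigl(\Pe{\ell}{j+2}p^{j+1}+j\,\Pe{\ell}{j+1}p^j\bigr).$$
Re-indexing each term ($j\mapsto j+2$ in the first, $j\mapsto j+1$ in the second) collects the coefficient of $\ell\,\Pe{\ell}{j}$ as $p^{j-1}\bigl((j-1)f_{k,j-1}(p)+f_{k,j-2}(p)\bigr)$, which we take as the definition of $f_{k+1,j}(p)$; this is exactly property (ii), and the combined index range is $k+1\le j\le 2k+1=2(k+1)-1$, as required.

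Finally, property (i) follows by induction from the recurrence in (ii): it holds at $k=2$, and if $f_{k,j}(p)=0$ for every $j\notin\{k,\dots,2k-1\}$, then whenever $j\le k$ or $j\ge 2k+2$ both $f_{k,j-1}(p)$ and $f_{k,j-2}(p)$ vanish, forcing $f_{k+1,j}(p)=0$ outside $\{k+1,\dots,2(k+1)-1\}$. The only point needing care is the bookkeeping of the two index shifts, together with the observation that the reduction $\Pe{\ell-1}{m}=\Pe{\ell}{m+1}/\ell$ is precisely what converts the leading $\ell^2$ of the recursive definition into the single factor $\ell$ of the statement; the combinatorial substance is confined to the splitting $i\,\Pe{i}{j}=\Pe{i}{j+1}+j\,\Pe{i}{j}$ and the closed form for binomial factorial moments.
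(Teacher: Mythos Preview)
Your proof is correct and follows essentially the same route as the paper: induct on $k$, substitute the inductive hypothesis into the recursive definition of $N(k+1,\ell,p)$, interchange sums, evaluate $\sum_i i\,\Pe{i}{j}\binom{\ell-1}{i}p^i(1-p)^{\ell-1-i}=j\Pe{\ell-1}{j}p^j+\Pe{\ell-1}{j+1}p^{j+1}$, and re-index. The only cosmetic difference is that the paper packages this last identity as a separate lemma (\Lem{app-fact-mom}) proved by differentiating the binomial theorem, whereas you obtain it by the splitting $i\,\Pe{i}{j}=\Pe{i}{j+1}+j\,\Pe{i}{j}$ together with the standard factorial-moment formula; the two derivations are equivalent.
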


\begin{proof}
We prove this by induction on $k$. For the base case, {\em i.e.} for $k=2$ this is 
\begin{eqnarray*}
N(2,\ell,p) &=& \ell^2\left(\sum_{i=1}^{\ell-1}i^2\binom{\ell-1}{i}p^i(1-p)^{\ell-1-i}\right)\\
&=& \ell^2 \Pe{\ell-1}{1}p+\ell^2\Pe{\ell-1}{2}p^2 \text{ (using \Eqn{app-binomk} with $k=2$)}\\
&=& \ell\Pe{\ell}{2}p^{\binom{2}{2}}+\ell\Pe{\ell}{3}p^{2\binom{2}{2}}
\end{eqnarray*}

Suppose the claim is true for $N(i,\ell,p)$ for $i=\{1,2,\dots,k\}$. We will show that the claim is true for $i=k+1$. From \Def{app-k-nesting} we have
\begin{align}\label{eq:app-num-poly}
&N(k+1,\ell, p)= \ell^2\sum_{m=k}^{\ell-1}N(k,m,p)\binom{\ell-1}{m}p^m(1-p)^{\ell-1-m}\nonumber \\
=&~ \ell^2\sum_{m=k}^{\ell-1}\sum_{j=k}^{2k-1}\left(m\Pe{m}{j}f_{k,j}(p)\right)\binom{\ell-1}{m}p^m(1-p)^{\ell-1-m}\nonumber \\
=& \sum_{j=k}^{2k-1}\sum_{m=k}^{\ell-1}\left(\ell^2\Pe{m}{j}\binom{\ell-1}{m}p^m(1-p)^{\ell-1-m} \right)f_{k,j}(p)\nonumber \text{ (interchanging the summations)}\\
=&\sum_{j=k}^{2k-1} (j\cdot \ell^2\Pe{\ell-1}{j}p^{j}+\ell^2\Pe{\ell-1}{j+1}p^{j+1})f_{k,j}(p)\nonumber \text{ (from \Lem{app-fact-mom}, Eqn.\Eqn{app-sub})} \\
=&\sum_{i=k+1}^{2k+1} (j\cdot \ell\Pe{\ell}{i}p^{i-1}+\ell\Pe{\ell}{i+1}p^{i})f_{k,i-1}(p)\nonumber  \text{ (using $\ell\Pe{\ell-1}{i}=\Pe{\ell}{i+1}$), $j+1=i$, $f_{k,2k}(p)=0$)}\\
=& \sum_{i=k+1}^{2k+1}p^{i-1}((i-1)f_{k,i-1}(p)+f_{k,i-2}(p))\ell\Pe{\ell}{i}\nonumber \text{ (rearranging the terms and using $f_{k,k-1}(p)=0$)} \\
=& \sum_{i=k+1}^{2k+1}f_{k+1,i}(p)\ell\Pe{\ell}{i}\nonumber \text{ (rearranging the terms and using $f_{k,k-1}(p)=0$)} \\
\end{align}

\end{proof}
The following lemma is used in the proof of \Lem{app-num-term}.
\begin{lemma}\label{lem:app-fact-mom}
\begin{align}\label{eq:app-fact-mom}
\sum_{m=j}^{n}m\Pe{m}{j}\binom{n}{m}x^{m-j}=j\Pe{n}{j}(1+x)^{n-j}+\Pe{n}{j+1}x(1+x)^{n-j-1}
\end{align}
In particular, if we multiply \Eqn{app-fact-mom} by $x^j(1-p)^n$ and substitute $x=p/(1-p)$ we get
\begin{align}
\label{eq:app-sub}
\sum_{m=j}^{n}m\Pe{m}{j}\binom{n}{m}p^m(1-p)^{n-m}=j\Pe{n}{j}p^j+\Pe{n}{j+1}p^{j+1}
\end{align}
\end{lemma}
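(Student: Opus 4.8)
The plan is to reduce the left-hand side to a differentiated binomial theorem after absorbing the falling factorial into the binomial coefficient. First I would invoke the elementary absorption identity
$$\Pe{m}{j}\binom{n}{m} = \frac{n!}{(m-j)!\,(n-m)!} = \Pe{n}{j}\binom{n-j}{m-j},$$
valid for all $m\ge j$. Substituting this and re-indexing by $\ell = m-j$ rewrites the sum as
$$\Pe{n}{j}\sum_{\ell=0}^{n-j}(\ell+j)\binom{n-j}{\ell}x^{\ell} = \Pe{n}{j}\left(\sum_{\ell=0}^{n-j}\ell\binom{n-j}{\ell}x^{\ell} + j\sum_{\ell=0}^{n-j}\binom{n-j}{\ell}x^{\ell}\right).$$

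Next I would evaluate the two inner sums using identities already in hand: the second equals $(1+x)^{n-j}$ by the binomial theorem \Eqn{app-binom}, and the first equals $(n-j)x(1+x)^{n-j-1}$ by \Eqn{app-1st} with $n$ replaced by $n-j$. Plugging these back gives
$$\Pe{n}{j}\bigl((n-j)x(1+x)^{n-j-1} + j(1+x)^{n-j}\bigr),$$
and the relation $(n-j)\Pe{n}{j} = \Pe{n}{j+1}$ turns this into exactly the right-hand side of \Eqn{app-fact-mom}.

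For the specialization \Eqn{app-sub}, I would multiply both sides of \Eqn{app-fact-mom} by $x^{j}(1-p)^{n}$ and set $x = p/(1-p)$, so that $1+x = 1/(1-p)$. On the left, $x^{j}\cdot x^{m-j}(1-p)^{n}$ collapses to $p^{m}(1-p)^{n-m}$; on the right, the powers of $(1-p)$ cancel exactly in each of the two terms, leaving $j\Pe{n}{j}p^{j} + \Pe{n}{j+1}p^{j+1}$, as claimed.

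I expect no genuine obstacle: the proof is a direct appeal to the differentiated binomial theorem from \Sec{app-model-definitions}, and the only care required is bookkeeping with the falling factorials and checking the exponent arithmetic in the final substitution. The one step worth isolating as a named identity is the absorption $\Pe{m}{j}\binom{n}{m} = \Pe{n}{j}\binom{n-j}{m-j}$, since it is what lets the whole left-hand side be handled by a single application of the already-established formula.
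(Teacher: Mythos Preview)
Your proof is correct and cleaner than the paper's. The paper proceeds by induction on $j$: the base case $j=0$ is \Eqn{app-1st}, and the step from $j$ to $j+1$ is obtained by differentiating both sides of \Eqn{app-fact-mom} with respect to $x$ and regrouping the falling factorials via $(n-j)\Pe{n}{j}=\Pe{n}{j+1}$.

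You take a genuinely different, non-inductive route: the absorption identity $\Pe{m}{j}\binom{n}{m}=\Pe{n}{j}\binom{n-j}{m-j}$ lets you pull $\Pe{n}{j}$ outside the sum and reduce the whole thing to a \emph{single} application of the binomial theorem and its first derivative (with $n$ replaced by $n-j$). This is shorter and exposes more clearly why only two terms appear on the right-hand side: the extra factor $m=(\ell+j)$ splits into a constant part and a linear part, matching the undifferentiated and once-differentiated binomial expansions respectively. The paper's inductive proof, by contrast, fits the ambient theme of \Sec{app-model-definitions} (repeatedly applying $x\,\partial/\partial x$), so it is natural in context even if less direct. Both arguments use the same final bookkeeping for the specialization \Eqn{app-sub}.
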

\begin{proof}
We prove the identity in \Eqn{app-fact-mom} using induction.For $j=0$ (base case) we need to show that
$\sum_{m=0}^nm\binom{n}{m}x^m=nx(1+x)^{n-1}$, which holds from \Eqn{app-1st}. For hypothesis, assume that \Eqn{app-fact-mom} holds for $j$. We prove that it also holds for $j+1$ as follows. 
Differentiating \Eqn{app-fact-mom} w.r.t. $x$ gives 
\begin{eqnarray}
\sum_{m=j+1}^{n}m(m-j)\Pe{m}{j}x^{m-j-1}&=& j(n-j)\Pe{n}{j}(1+x)^{n-j-1}\Pe{n}{j+1}(1+x)^{n-j-1}\nonumber\\&&+(n-j-1)\Pe{n}{j+1}x(1+x)^{n-j-2}\nonumber \\
\sum_{m=j+1}^{n}m\Pe{m}{j+1}x^{m-(j+1)}&=& j\Pe{n}{j+1}(1+x)^{n-j-1}+\Pe{n}{j+1}(1+x)^{n-j-1}\nonumber \\&&+\Pe{n}{j+2}x(1+x)^{n-j-2}\nonumber \text{ (using $n\Pe{n-1}{i}=\Pe{n}{i+1}$)} \\
\sum_{m=j+1}^{n}m\Pe{m}{j+1}x^{m-(j+1)}&=& (j+1)\Pe{n}{j+1}(1+x)^{n-j-1}+\Pe{n}{j+2}x(1+x)^{n-j-2}\nonumber
\end{eqnarray}

Hence the identity holds for $j+1$.
\end{proof}

The following lemma upper bounds $f_{k,k+i}(p)$ for $0\le i\le k-1$
\begin{lemma}\label{lem:app-f-upper-bound}
For $k\ge 2$ $f_{k,2k-i-1}(p)\le k^{2i}p^{\binom{k}{2}+\binom{k-i}{2}}$ where $0\le i\le k-1$. 
\end{lemma}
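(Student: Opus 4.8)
The plan is to prove both displayed inequalities together by induction on $k$, feeding off the recurrence $f_{k+1,j}(p)=p^{j-1}\bigl((j-1)f_{k,j-1}(p)+f_{k,j-2}(p)\bigr)$ and the vanishing boundary values $f_{k,2k}(p)=0$ and $f_{k,k-1}(p)=0$ from \Lem{app-num-term}, together with two elementary facts: $\binom m2=\binom{m-1}2+(m-1)$ for the exponent bookkeeping and $(k+1)^{2i}\ge k^{2i}+2i\,k^{2i-1}$ for the coefficients (and $0<p\le 1$). For the base case $k=2$ I would simply read off $f_{2,2}(p)=p$ and $f_{2,3}(p)=p^2$ from the computation already carried out inside the proof of \Lem{app-num-term}, and verify that the claim for $i=0$ reads $p^{\binom22+\binom22}=p^2$ and for $i=1$ reads $2^2p^{\binom22+\binom12}=4p$, both of which hold.

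For the inductive step, assuming the bound for $k$, I would put $j=2(k+1)-i-1=2k+1-i$ in the recurrence, so $j-1=2k-i$ and $j-2=2k-1-i$, giving $f_{k+1,2k+1-i}(p)=p^{2k-i}\bigl((2k-i)f_{k,2k-i}(p)+f_{k,2k-1-i}(p)\bigr)$. The point is that $f_{k,2k-i}(p)$ is the induction hypothesis evaluated at shift $i-1$ (vanishing when $i=0$ because $f_{k,2k}(p)=0$) while $f_{k,2k-1-i}(p)$ is the hypothesis at shift $i$ (vanishing when $i=k$ because $f_{k,k-1}(p)=0$). Substituting the hypothesis and using $\binom{k-i+1}2=\binom{k-i}2+(k-i)$, one sees that the $f_{k,2k-1-i}$ contribution carries exactly the target exponent $\binom{k+1}2+\binom{k+1-i}2=\binom k2+\binom{k-i}2+(2k-i)$, while the $f_{k,2k-i}$ contribution carries that exponent plus a further $k-i\ge0$; since $0<p\le1$, both contributions are at most their coefficients times $p^{\binom{k+1}2+\binom{k+1-i}2}$. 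Collecting the coefficients reduces the step to the inequality $(2k-i)k^{2i-2}+k^{2i}\le(k+1)^{2i}$.

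For $1\le i\le k-1$ this last inequality follows from $(2k-i)k^{2i-2}\le 2k^{2i-1}$ together with $(k+1)^{2i}\ge k^{2i}+2i\,k^{2i-1}\ge k^{2i}+2k^{2i-1}$ (using $i\ge1$); the two boundary cases $i=0$ and $i=k$, where one term of the recurrence has already dropped, are checked by hand and in fact give $f_{k+1,2k+1}(p)=p^{k(k+1)}$ and $f_{k+1,k+1}(p)=k\,p^k f_{k,k}(p)\le k^{2k-1}p^{\binom{k+1}2}\le(k+1)^{2k}p^{\binom{k+1}2}$. The only part demanding real attention is the index and exponent bookkeeping — keeping straight that the $(k+1)$-recurrence at index $2k+1-i$ draws on the $k$-hypothesis at indices $2k-i$ and $2k-1-i$, i.e.\ at shifts $i-1$ and $i$ — and cleanly separating off the two values of $i$ at which \Lem{app-num-term}(i) annihilates a recurrence term; beyond that there is no genuine obstacle.
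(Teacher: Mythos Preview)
Your argument is correct and follows essentially the same route as the paper: induction on $k$ via the recurrence from \Lem{app-num-term}, the same $\binom{m}{2}=\binom{m-1}{2}+(m-1)$ exponent bookkeeping, the same use of $0<p\le 1$ to drop the surplus $p^{\,k-i}$, and the same coefficient inequality (the paper reaches $(k-1)^{2i}+2(k-1)^{2i-1}\le k^{2i}$, which is your $(k+1)^{2i}\ge k^{2i}+2k^{2i-1}$ shifted by one). Your treatment of the two boundary indices where a recurrence term vanishes is a bit more explicit than the paper's, but the substance is identical.
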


\begin{proof}
We will prove this claim using induction on $k$. Consider $k=2$ for the base case. From \Def{app-k-nesting}, we have $N(2,n,p)=n\Pe{n}{2}p+n\Pe{n}{3}p^2$. So, the claim holds. Now assume that the claim holds for all clique sizes up to $k-1$ 
Now, from \Eqn{app-num-poly}, we have the following recurrence relation. 
\begin{eqnarray}\label{eq:app-recurrence}
f_{k,i}(p)&=&p^{i-1}((i-1)f_{k-1,i-1}(p)+f_{k-1,i-2}(p))
\end{eqnarray}

First we prove for $i\ge 1$. Using \Eqn{app-recurrence}, we have 
\begin{align*}
f_{k,2k-i-1}(p)&=p^{2k-i-2}((2(k-1)-i)f_{k-1,2(k-1)-(i-1)-1}(p)+f_{k-1,2(k-1)-i-1}(p))\\
&\le p^{2k-i-2}\left((2(k-1)-i)(k-1)^{2(i-1)}p^{\binom{k-1}{2}+\binom{k-i}{2}}+(k-1)^{2i}p^{\binom{k-1}{2}+\binom{k-i-1}{2}}\right)\\
&=(k-1)^{2i}p^{\binom{k}{2}+\binom{k-1}{2}}\left(1+p^{k-i-1}\left(\frac{2}{k-1}-\frac{i}{(k-1)^2}\right)\right)\\
&\le(k-1)^{2i} p^{\binom{k}{2}+\binom{k-1}{2}}\left(1+\frac{2}{k-1}\right)\\
&= \left((k-1)^{2i} +2(k-1)^{2i-1}\right)p^{\binom{k}{2}+\binom{k-1}{2}}\le k^{2i}p^{\binom{k}{2}+\binom{k-1}{2}} \text{ (for $i\ge 1$)}
\end{align*}

Now we show that $f_{k,2k-1}=p^{2\binom{k}{2}}$. From \Eqn{app-recurrence}, we have $f_{k,2k-1}(p)=p^{2(k-1)}((2k-2)f_{k-1, 2k-2}(p)+f_{k-1, 2k-3}(p))=p^{2(k-1)}f_{k-1,2k-3}(p)$ since $f_{k-1, 2k-2}(p)=0$. 
Applying the recurrence repeatedly, we get the desired relation.
\end{proof}

Now we bound $\crr(X)$ which is the same as $\frac{N(k,n,p)}{\left(\Pe{n}{k}p^{\binom{k}{2}}\right)^2}$
We have
\begin{eqnarray}\label{eq:app-cri-rat}
\crr(X)=\frac{N(k,n,p)}{\left(\Pe{n}{k}p^{\binom{k}{2}}\right)^2}=\sum_{j=k}^{2k-1}\frac{n\Pe{n}{i}f_{k,j}(p)}{\left(\Pe{n}{k}p^{\binom{k}{2}}\right)^2}&=&\sum_{i=0}^{k-1}\frac{n\Pe{n}{2k-i-1}f_{k,2k-i-1}(p)}{\left(\Pe{n}{k}p^{\binom{k}{2}}\right)^2}
\end{eqnarray}

\Lem{app-clique} immediately proves \Thm{app-clique}.

\begin{lemma}\label{lem:app-clique}
For $k=(1+o(1))\lp n$,  $\crr(X)=\sum_{i=0}^{k-1}\frac{n\Pe{n}{2k-i-1}f_{k,2k-i-1}(p)}{\left(\Pe{n}{k}p^{\binom{k}{2}}\right)^2}$ is upper bounded by $\poly(n)$.
\end{lemma}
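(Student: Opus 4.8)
The plan is to substitute the bound from \Lem{app-f-upper-bound} into the expression \Eqn{app-cri-rat} for $\crr(X)$ and show that each of the $k$ summands is at most $\poly(n)$; since there are only $k = (1+o(1))\lp n = O(\log n)$ terms, this suffices. Fix $i$ with $0 \le i \le k-1$ and consider the $i$-th term
$$T_i = \frac{n\,\Pe{n}{2k-i-1}\, f_{k,2k-i-1}(p)}{\bigl(\Pe{n}{k}\,p^{\binom{k}{2}}\bigr)^2}.$$
By \Lem{app-f-upper-bound}, $f_{k,2k-i-1}(p) \le k^{2i} p^{\binom{k}{2} + \binom{k-i}{2}}$, so
$$T_i \le k^{2i}\, \frac{n\,\Pe{n}{2k-i-1}}{\Pe{n}{k}^2}\; p^{\binom{k-i}{2} - \binom{k}{2}}.$$
The exponent of $p$ is negative: $\binom{k}{2} - \binom{k-i}{2} = ik - \binom{i+1}{2} \le ik$, so $p^{\binom{k-i}{2}-\binom{k}{2}} = p^{-(ik - \binom{i+1}{2})} \le (1/p)^{ik}$. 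Since $p$ is a constant and $k \le (1+o(1))\lp n$, we have $(1/p)^k \le n^{1+o(1)}$, hence this factor is at most $n^{(1+o(1))i}$.

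The main work is to bound the falling-factorial ratio $R_i := n\,\Pe{n}{2k-i-1}/\Pe{n}{k}^2$. Writing $\Pe{n}{k}^2 = \Pe{n}{k}\cdot\Pe{n}{k}$ and $\Pe{n}{2k-i-1} = \Pe{n}{k-i-1}\cdot\Pe{n-(k-i-1)}{k}$ (splitting the $2k-i-1$ descending factors into the first $k-i-1$ and the last $k$), one sees that $\Pe{n}{2k-i-1} \le \Pe{n}{k}\cdot \Pe{n}{k}$, in fact $\Pe{n}{2k-i-1}/\Pe{n}{k}^2 \le \Pe{n}{k-i-1}/\Pe{n}{k} \le 1/\Pe{n-k+i+1}{i+1}$; a cruder and cleaner bound is simply $R_i \le n \cdot \Pe{n}{2k-i-1}/\Pe{n}{k}^2 \le n \cdot n^{2k-i-1}/(n-k)^{2k} \le n^{2k} \cdot (1 - k/n)^{-2k}/n^{i} \le n^{2k}\cdot O(1)$, using that $k = O(\log n)$ so $(1-k/n)^{-2k} = e^{O(k^2/n)} = 1+o(1)$. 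The point I actually need, however, is the \emph{exact} cancellation at $i=0$: there $f_{k,2k-1}(p) = p^{2\binom{k}{2}}$ (the last line of the proof of \Lem{app-f-upper-bound}), so $T_0 = n\,\Pe{n}{2k-1}/\Pe{n}{k}^2 = n\,\Pe{n}{k-1}/\Pe{n}{k} \cdot \bigl(\Pe{n-k+1}{k}/\Pe{n}{k}\bigr) \le n$, which is genuinely polynomial with no slack lost. For $i \ge 1$ I can afford the crude bound: combining $R_i \le n^{2k-i} \le n^{2k}$ with the $p$-factor bound $n^{(1+o(1))i}$ and $k^{2i} = n^{o(1)}$ (since $k^{2i} \le k^{2k} = 2^{O(\log n \log\log n)} = n^{o(1)}$), we get $T_i \le n^{2k + (1+o(1))i + o(1)} \le n^{O(\log n)}$.

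Wait — that is not polynomial, so I must be more careful and exploit the cancellation in $R_i$ rather than bounding $\Pe{n}{2k-i-1}$ by $n^{2k-i-1}$. The correct move is: $\dfrac{\Pe{n}{2k-i-1}}{\Pe{n}{k}^2} = \dfrac{\Pe{n}{2k-i-1}}{\Pe{n}{k}\,\Pe{n}{k}}$, and since $\Pe{n}{2k-i-1}$ has $2k-i-1 \le 2k$ factors each at most $n$ while the denominator has $2k$ factors each at least $n-2k$, we get $R_i \le n \cdot \dfrac{n^{2k-i-1}}{(n-2k)^{2k}} = \dfrac{n^{2k-i}}{(n-2k)^{2k}} \le \dfrac{(1+o(1))}{n^{i}}$, because $n^{2k}/(n-2k)^{2k} = (1-2k/n)^{-2k} = 1+o(1)$ when $k = O(\log n)$. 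So in fact $R_i \le (1+o(1)) n^{-i}$, and then
$$T_i \le k^{2i}\cdot (1+o(1))\,n^{-i}\cdot (1/p)^{ik} \le (1+o(1))\,\bigl(k^2 \cdot n^{-1} \cdot (1/p)^k\bigr)^{i}.$$
Now $(1/p)^k \le n^{1+o(1)}$ and $k^2 = n^{o(1)}$, so the base $k^2 n^{-1}(1/p)^k \le n^{o(1)}$ is sub-polynomial but could exceed $1$; raised to the power $i \le k = O(\log n)$ this gives $T_i \le n^{o(1)\cdot O(\log n)}$, which is $n^{o(\log n)}$ — \emph{still not obviously polynomial}. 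The resolution, and the one subtle point the lemma's hypothesis $k = (1+o(1))\lp n$ is tailored for, is that $(1/p)^k \le n \cdot n^{o(1)}$ exactly cancels the $n^{-1}$: so the base is $k^2 \cdot n^{o(1)} = n^{o(1)}$ only through the $k^2$ and the $o(1)$ exponent, and we need $\sum_i T_i$; using $T_i \le (1+o(1))(k^2 (1/p)^k / n)^i$ and that $(1/p)^k/n = n^{o(1)}$, the sum over $i=0,\dots,k-1$ is a geometric-type series whose largest term is $T_{k-1} \le (1+o(1))(n^{o(1)})^{k-1}$.

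\textbf{Main obstacle.} The crux is therefore showing $(k^2 (1/p)^k/n)^{k} = \poly(n)$, i.e. $k \log(k^2 (1/p)^k/n) = O(\log n)$. Writing $k = (1+\delta)\lp n$ with $\delta = o(1)$: $\log((1/p)^k/n) = k\log(1/p) - \log n = (1+\delta)\log n - \log n = \delta \log n$, so $\log(k^2(1/p)^k/n) = \delta\log n + 2\log k = \delta \log n + O(\log\log n)$, and multiplying by $k = O(\log n)$ gives $O(\delta \log^2 n + \log n \log\log n)$. This is $O(\log n)$ \emph{only if} $\delta = O(1/\log n)$, i.e. only if the $o(1)$ in "$(1+o(1))\lp n$" is actually $O(1/\log n)$. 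So the genuine technical heart of the proof is a sharper accounting than I sketched: one must keep the \emph{exact} exponents $\binom{k}{2}+\binom{k-i}{2}$ from \Lem{app-f-upper-bound} together with the \emph{exact} falling factorials, so that the dominant powers of $n$ cancel to within $\poly(n)$ for \emph{every} value of $i$ simultaneously, and then invoke that with $k=(1+o(1))\lp n$ the leftover discrepancy $\Pe{n}{\cdot}/n^{\cdot}$ and $p^{-(\text{error})}$ terms contribute at most $n^{O(1)}$. I would carry this out by proving the stronger per-term estimate $T_i \le C\, n^{c}$ with $C, c$ absolute constants by induction on $i$, using $\binom{k}{2} - \binom{k-i}{2} = i(k - (i+1)/2)$ and the Stirling-type bound $\Pe{n}{m} = n^m e^{-m^2/(2n) + O(m^3/n^2)}$, which with $m = O(\log n)$ has all correction factors equal to $1+o(1)$; the powers of $n$ then reduce to checking $2k - i - (2k) + (\text{exponent gain from } p) = O(1)$, which is exactly where $k \log(1/p) = (1+o(1))\log n$ is used.
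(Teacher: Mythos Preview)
Your overall strategy matches the paper's: bound each summand $T_i$ by combining \Lem{app-f-upper-bound} with a direct estimate of the falling-factorial ratio. Your bound $R_i \le (1+o(1))n^{-i}$ is equivalent to the paper's $\prod_{r=1}^{i}(\ell-k+r)^{-1}\le (\ell-k+1)^{-i}$, and your handling of $T_0$ is fine (in fact $T_0\le 1$). The genuine gap is exactly the step you flag but do not repair: you replace $(1/p)^{\binom{k}{2}-\binom{k-i}{2}}=(1/p)^{ik-\binom{i+1}{2}}$ by $(1/p)^{ik}$, discarding the quadratic term $-\binom{i+1}{2}$. With that crude bound you get $T_i\lesssim\bigl(k^2(1/p)^k/n\bigr)^i$ and then claim this is $\poly(n)$ once $\delta=O(1/\log n)$. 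That claim is false: even at $\delta=0$ the base is exactly $k^2$, and $(k^2)^{k-1}$ has logarithm $\Theta(\log n\cdot\log\log n)$, which is not $O(\log n)$. Your own line ``multiplying by $k=O(\log n)$ gives $O(\delta\log^2 n+\log n\log\log n)$'' already contains the offending $\log n\log\log n$ term, and no choice of $\delta$ removes it. So the crude bound can never close.

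The fix --- which you gesture at in your last paragraph but do not execute --- is to \emph{retain} $-\binom{i+1}{2}$ in the exponent. This yields the paper's per-term bound
\[
T_i \;\le\; h(i)\;:=\;\Bigl(\tfrac{k^{2}}{n-k+1}\,(1/p)^{\,k-(i+1)/2}\Bigr)^{i},
\]
whose $\log_{1/p}$ is the \emph{concave} quadratic $g(i)=2i\lp{k}-i\lp{(n-k+1)}+ki-\tfrac{i(i+1)}{2}$. Maximizing over $i$ puts the vertex at $i^*\approx 2\lp\lp n+\delta\lp n$ with value $g(i^*)\approx 2(\lp\lp n)^2+\tfrac{\delta^2}{2}(\lp n)^2$. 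Since $(\lp\lp n)^2=o(\lp n)$, one gets $h(i)=\poly(n)$ precisely when $\delta=O(1/\sqrt{\lp n})$ (this is the content of \Lem{app-poly}), a weaker constraint than your $O(1/\log n)$. The quadratic term is the entire mechanism that caps $\max_i g(i)$ at the $(\lp\lp n)^2$ scale instead of the $\lp n\cdot\lp\lp n$ scale; without it the argument cannot be completed.
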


\begin{proof}
Consider the ratio $\frac{\ell\Pe{\ell}{2k-i-1}f_{k,2k-i-1}(p)}{\left(\Pe{\ell}{k}p^{\binom{k}{2}}\right)^2}$ for a fixed $i$. Here we have $\ell = n$. As we shall see, $\ell$ changes for the $k$-Clique cover. For $i=0$, this is $\frac{\ell\Pe{\ell}{2k-1}}{(\Pe{\ell}{k})^2}$ since $f_{k,2k-1}=p^{2\binom{k}{2}}$. Note that 
$\frac{\ell\Pe{\ell}{2k-1}}{(\Pe{\ell}{k})^2}\le 1$. Now we consider $i\ge 1$.
\begin{eqnarray}\label{eq:app-hi}
\frac{\ell\Pe{\ell}{2k-1-i}f_{k,2k-i-1}(p)}{\left(\Pe{\ell}{k}p^{\binom{k}{2}}\right)^2}&=&\left(\prod_{j=1}^{k-i-1}\frac{(\ell-(k-1)-j)}{(\ell-j)}\right)\left(\frac{f_{k,2k-i-1}(p)}{\prod_{r=1}^{i}(\ell-k+r)}\right)\frac{1}{p^{2\binom{k}{2}}}\nonumber \\
&\le& \left(\frac{\ell-k}{\ell-1}\right)^{k-i-1}\left(\frac{k^{2i}p^{\binom{k}{2}+\binom{k-i}{2}}}{(\ell-k+1)^i}\right)\frac{1}{p^{2\binom{k}{2}}}\nonumber  \\
&=& \left(\frac{\ell-k}{\ell-1}\right)^{k-i-1}\left(\frac{k^{2}}{\ell-k+1}\right)^i\frac{1}{p^{\binom{k}{2}-\binom{k-i}{2}}}\nonumber \\
&=& \left(\frac{\ell-k}{\ell-1}\right)^{k-i-1}\left(\frac{k^{2}}{\ell-k+1}\left(\frac{1}{p}\right)^{k-\left(\frac{i+1}{2}\right)}\right)^i = h(i)
\end{eqnarray}
The first inequality above uses \Lem{app-f-upper-bound}, $\frac{\ell-k}{\ell-1}\ge \frac{\ell-k-j}{\ell-1-j}$ and $\ell-k+j\ge \ell -k+1$ for all $1\le j\le k-1$. Note that $\left(\frac{\ell-j}{\ell-1}\right)^{k-i-1}\le 1$. So, we have 
$h(i)\le \left(\frac{k^{2}}{\ell-k+1}\left(\frac{1}{p}\right)^{k-\left(\frac{i+1}{2}\right)}\right)^i$, where $h(i)$ is as defined in \Eqn{app-hi}. Note that for $i=(1+o(1))\log n$, $\left(\frac{k^{2}}{\ell-k+1}\left(\frac{1}{p}\right)^{k-\left(\frac{i+1}{2}\right)}\right)^i$ 
is polynomially bounded for all $0\le i\le k-1$. Therefore $\crr(X)$ is polynomially bounded.
\end{proof}

\begin{lemma}\label{lem:app-poly}
For $k=(1+o(1))\lp{n}$, $h(i)=\left(\frac{k^{2}}{n-k+1}\left(\frac{1}{p}\right)^{k-\left(\frac{i+1}{2}\right)}\right)^i$ is polynomially bounded for all $0\le i\le k-1$.
\end{lemma}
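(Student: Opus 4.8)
The plan is to pass to logarithms: since $h(i)$ is polynomially bounded in $n$ precisely when $\ln h(i) = O(\ln n)$, and since $h(0)=1$ trivially, it suffices to prove $\ln h(i) = O(\ln n)$ uniformly over $1 \le i \le k-1$. First I would rewrite the base of $h(i)$ using the hypothesis. Writing $k=(1+o(1))\lp{n}$ gives $(1/p)^{k}=n^{1+o(1)}$, and for all large $n$ we have $\frac{k^{2}}{n-k+1}\le\frac{2k^{2}}{n}$ because $k=o(n)$. Splitting off the $i$-dependence via $(1/p)^{k-(i+1)/2}=(1/p)^{k}\,p^{(i+1)/2}$, the base of $h(i)$ is at most $2k^{2}\,n^{o(1)}\,p^{(i+1)/2}$, so that $h(i)\le\bigl(2k^{2}\,n^{o(1)}\,p^{(i+1)/2}\bigr)^{i}$.

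Taking logarithms (with $\eta_n=o(1)$ denoting the relevant error so that $n^{o(1)}=n^{\eta_n}$, and recalling $\ln\frac1p>0$), this reads $\ln h(i)\le i\ln(2k^{2})+\eta_n\, i\ln n-\tfrac{i(i+1)}{2}\ln\tfrac1p\le i\bigl(\ln(2k^{2})+\eta_n\ln n\bigr)-\tfrac{\ln(1/p)}{2}\,i^{2}$, where I discarded the harmless term $-\tfrac{i}{2}\ln\tfrac1p\le 0$. The essential point is that the right-hand side is a \emph{concave} quadratic in $i$: the $-\tfrac{\ln(1/p)}{2}i^{2}$ term is exactly the quadratic saving extracted in \Lem{app-f-upper-bound} from the factor $p^{\binom{k}{2}-\binom{k-i}{2}}$ — intuitively, every additional vertex of the ``second'' embedded copy of the clique must agree with the first copy on all previously placed vertices, which costs a geometrically decaying probability. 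Completing the square in $i$ then yields, uniformly over all $i$, the bound $\ln h(i)\le\dfrac{\bigl(\ln(2k^{2})+\eta_n\ln n\bigr)^{2}}{2\ln(1/p)}$.

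It remains to check this bound is $O(\ln n)$. Since $\ln(2k^{2})=2\ln\ln n+O(1)$, which is negligible, the numerator is dominated by the square of the $\eta_n\ln n$ term, and I expect this to be the main obstacle and the only delicate point: the crude estimate obtained by throwing away the shrinkage $p^{(i+1)/2}$, namely $h(i)\le\bigl(n^{o(1)}\bigr)^{i}\le n^{o(\ln n)}$, is only quasi-polynomial, so the $-i^{2}$ term genuinely has to be exploited, and one needs the error $\eta_n$ in $k=(1+o(1))\lp{n}$ to decay fast enough (of order $O(1/\sqrt{\ln n})$, equivalently $k=\lp{n}+O(\sqrt{\ln n})$) that $\bigl(\eta_n\ln n\bigr)^{2}=O(\ln n)$. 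Granting this, $\ln h(i)=O(\ln n)$ for every $0\le i\le k-1$, which proves the lemma; feeding this back into \Eqn{app-cri-rat} bounds $\crr(X)$ by $\poly(n)$, and then \Prop{app-FK08} delivers \Thm{app-clique}.
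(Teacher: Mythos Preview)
Your proof is correct and follows essentially the same route as the paper's. Both arguments pass to the exponent, recognize that $\log h(i)$ is a concave quadratic in $i$, maximize it (the paper by setting $\partial g/\partial i=0$, you by completing the square --- these are equivalent), and arrive at the identical conclusion: the stated hypothesis $k=(1+o(1))\lp n$ is not quite enough, and one actually needs the error term $\eta_n$ to be $O(1/\sqrt{\log n})$ (equivalently $k=\lp n + O(\sqrt{\log n})$) for $h(i)$ to be polynomially bounded. Your write-up is slightly more careful in making the intermediate inequalities explicit, but there is no substantive difference in method or outcome.
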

\begin{proof}
First note that $$\left(\frac{k^{2}}{n-k+1}\left(\frac{1}{p}\right)^{k-\left(\frac{i+1}{2}\right)}\right)^i = \left(\frac{1}{p}\right)^{\left(2i\lp{k}-i\lp({n-k+1})+ki-\frac{i(i+1)}{2}\right)}$$

Let $g(i) = 2i\lp{k}-i\lp{(n-k+1)}+ki-\frac{i(i+1)}{2}$, where $k=(1+\eps_n)\lp{n}$. This function is maximized at the point where $\partial g(i)/\partial i =0$, which happens at
$i \approx2\lp\lp n+\eps_n\lp n $. At this point, $g(i)\approx 2\left(\lp\lp n\right)^2+\frac{\eps_n^2}{2}(\lp n)^2$. Note that $h(i)=\left(\frac{1}{p}\right)^{g(i)}$ is polynomially bounded only when $\eps_n=O(\frac{1}{\sqrt{\lp n}})$.

\end{proof}

\subsection{Clique cover counting}\label{sec:app-ccover}
As noted earlier in \Prop{app-FK08}, we focus on bounding the  {\em critical ratio of averages} given by $\crr(X)=\E_\cG[\E_\cA[X^2]]/(\E_\cG[E_\cA[X]])^2$ for \Alg{app-clique}. 

The estimator embeds one clique at a time, by selecting a vertex at random at first and then embedding each edge till $k$ vertices of the clique are embedded. A crucial observation is that the residual graph, after embedding a clique still remains random with edge probability $p$. 
Finally, the estimator sequentially embeds $n/k$ cliques to get the clique cover and outputs the inverse of probability of getting this clique cover, if the embedding procedure goes through, otherwise it outputs $0$. Note that this is the product of the inverse of the probabilities for embedding each clique. 
Let $K_{i}$ denote the random variable corresponding to the estimate of the number of embeddings of the $i^{th}$ clique in the residual graph, which is a random graph from $\cG(n-ki-k,p)$. Note that $K_i$ is independent from $K_j$ for $i\neq j$ and $X=K_1\cdot K_2\cdots K_{\frac{n}{k}}$.
Therefore we have the following equation.
\begin{eqnarray}\label{eq:app-den}
(\E_\cG[E_\cA[X]])^2=\prod_{i=1}^{\frac{n}{k}}(E[K_i])^2
\end{eqnarray}
Note that the equality follows from the fact that after embedding each $k$-clique, the residual graph still remains random with edge probability $p$. Now, we bound the numerator, {\em i.e.}, $\E_\cG[\E_\cA[X^2]]$.  
\begin{eqnarray}\label{eq:app-num}
\E_\cG[\E_\cA[X^2]]=\E_\cG[\E_\cA[K_1^2K_2^2\cdots K_{\frac{n}{k}}^2]]= \E_\cG[E_\cA[K_1^2]]\cdot \E_\cG[\E_\cA[K_2^2]]\cdots \E_\cG[E_\cA[K_{\frac{n}{k}}^2]]
\end{eqnarray}

Let $X_j$ corresponds to the number of ways to embed vertex $j$ in the residual graph. Note that $K_{i}=X_{ki-k+1}\cdot X_{ki-k+2}\cdots X_{ki}$. 

Now consider the term $\E_\cG[\E_{\cA}[K_{i}^2]]=\E_\cG[\E_\cA[X_{k(i-1)+1}^2X_{k(i-1)+2}^2\cdots X_{ki}^2]]$. 
Note that in this case, we have 
\begin{eqnarray}\label{eq:app-num-term}
\crr(K_{i})=\frac{\E_\cG[\E_\cA[X_{ki-k+1}^2X_{ki-k+2}^2\cdots X_{ki}^2]]}{\E_\cG[\E_\cA[X_{ki-k+1}X_{ki-k+2}\cdots X_{ki-k+k}]]^2}= \frac{N(k, n-ki+k, p)}{\left(\Pe{n-ki}{k}p^{\binom{k}{2}}\right)^2}
\end{eqnarray}
We show in \Lem{app-cri-rat-int-bound} that $\frac{N(k,\ell,p)}{\left(\Pe{\ell}{k}p^{\binom{k}{2}}\right)^2}$ is bounded by $1+O\left(\frac{1}{n-ki+1}\right)$ for all $i\in\{1,2,\dots,\frac{n}{k}\}$, where $\ell=n-ki+k$. 
\begin{lemma}\label{lem:app-cri-rat-int-bound}
For large $\ell$, constant $k$  and constant $p$ we have $$\crr(K_{i})=\sum_{j=0}^{k-1}\frac{\ell\Pe{\ell}{2k-j-1}f_{k,2k-j-1}(p)}{\left(\Pe{\ell}{k}p^{\binom{k}{2}}\right)^2} \le 1+O\left(\frac{1}{\ell-k+1}\right)$$
\end{lemma}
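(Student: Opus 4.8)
The plan is to reuse the term-by-term decomposition and the per-term estimate already established for the single-clique case in \Lem{app-clique}, but now to keep explicit track of the constants, since here both $k$ and $p$ are constant and we want the sharper bound $1+O(1/(\ell-k+1))$ rather than merely $\poly$. Write $\crr(K_i)=\sum_{j=0}^{k-1}h(j)$ with $h(j)=\frac{\ell\Pe{\ell}{2k-j-1}f_{k,2k-j-1}(p)}{\left(\Pe{\ell}{k}p^{\binom{k}{2}}\right)^2}$ and $\ell=n-ki+k$, and handle the term $j=0$ separately from the terms $j\ge1$.

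For $j=0$, recall from the proof of \Lem{app-f-upper-bound} that $f_{k,2k-1}(p)=p^{2\binom{k}{2}}$, so $h(0)=\frac{\ell\Pe{\ell}{2k-1}}{\left(\Pe{\ell}{k}\right)^2}$. Expanding the falling factorials and cancelling the common factors $\ell,\ell-1,\dots,\ell-k+1$ leaves $h(0)=\prod_{m=1}^{k-1}\frac{\ell-k-m+1}{\ell-m}=\prod_{m=1}^{k-1}\bigl(1-\tfrac{k-1}{\ell-m}\bigr)\le 1$ for $\ell$ large. This is exactly the step that produces the leading $1$ in the claimed bound; it must be shown to be $\le 1$ on the nose, not just $1+o(1)$, since all genuine error has to come from the remaining terms.

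For $j\ge1$, I would quote the estimate already derived in \Eqn{app-hi}, namely $h(j)\le\left(\frac{\ell-k}{\ell-1}\right)^{k-j-1}\left(\frac{k^{2}}{\ell-k+1}(1/p)^{k-(j+1)/2}\right)^{j}$. The first factor is at most $1$ because $k-j-1\ge0$. Since $k$ and $p$ are fixed, the quantity $k^{2}(1/p)^{k-(j+1)/2}$ is bounded by a constant $C=C(k,p)$ uniformly over the finitely many values $1\le j\le k-1$, so $h(j)\le\left(\frac{C}{\ell-k+1}\right)^{j}\le\frac{C^{k}}{\ell-k+1}$ for all large $\ell$, using that $j\ge1$ gives $(\ell-k+1)^{j}\ge\ell-k+1$. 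Summing over the at most $k-1$ indices $j\ge1$ yields $\sum_{j=1}^{k-1}h(j)\le\frac{(k-1)C^{k}}{\ell-k+1}=O\!\left(\frac{1}{\ell-k+1}\right)$, with the implied constant depending only on $k$ and $p$. Adding $h(0)\le1$ gives $\crr(K_i)\le 1+O\!\left(\frac{1}{\ell-k+1}\right)$, as desired.

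The argument is essentially routine once \Lem{app-f-upper-bound} and \Eqn{app-hi} are in hand; the only delicate point is the exact $\le1$ bound on the $j=0$ term, and a minor bookkeeping point is to take the constant $C(k,p)$ uniform in $j$, which is harmless because $k$ is constant and hence there are only boundedly many $j$. I do not anticipate a real obstacle here.
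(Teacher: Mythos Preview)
Your proof is correct and in fact cleaner than the paper's. Both arguments start identically: for $j=0$ one uses $f_{k,2k-1}(p)=p^{2\binom{k}{2}}$ to get $h(0)\le 1$, and for $j\ge 1$ one invokes the per-term bound \Eqn{app-hi}. From there the paper takes a longer detour: it separates $j=1$ from $j\ge 2$, handles $j\ge 2$ by differentiating $\log h(x)$ in $x$ to prove monotonicity and then checking $y(2)+\log(k-2)+\log(\ell-k+1)<0$, and handles $j=1$ by unrolling the recurrence to compute $f_{k,2k-2}(p)=\sum_{m=1}^{k-1}(2(k-m)-1)p^{2\binom{k}{2}-m}$ exactly and bounding the resulting sum by $C/p^k$.

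Your shortcut is to observe that once $k$ and $p$ are fixed constants, the factor $k^{2}(1/p)^{k-(j+1)/2}$ is bounded by a single constant $C=C(k,p)$ over the finitely many indices $1\le j\le k-1$, so $h(j)\le (C/(\ell-k+1))^{j}\le C^{k}/(\ell-k+1)$ and the sum of the $k-1$ tail terms is $O(1/(\ell-k+1))$. This is entirely legitimate here because the lemma only claims an $O$-bound with implicit constant allowed to depend on $k$ and $p$; the paper's case split and explicit recurrence computation do not yield a sharper dependence (its final constant $C'+1$ with $C'=C/p^{k}$ still depends on $k$ and $p$), so the extra work buys nothing. Your argument is the more economical one.
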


\begin{proof}
Consider the ratio $\frac{\ell\Pe{\ell}{2k-j-1}f_{k,2k-j-1}(p)}{\left(\Pe{\ell}{k}p^{\binom{k}{2}}\right)^2}$ for a fixed $j$. For $j=0$, this is $\frac{\ell\Pe{\ell}{2k-1}}{(\Pe{\ell}{k})^2}$ since $f_{k,2k-1}=p^{2\binom{k}{2}}$. Note that 
$\frac{\ell\Pe{\ell}{2k-1}}{(\Pe{\ell}{k})^2}\le 1$. Now we consider $j\ge 1$. 
As shown in \Eqn{app-hi}, we have 
$$\frac{\ell\Pe{\ell}{2k-1-j}f_{k,2k-j-1}(p)}{\left(\Pe{\ell}{k}p^{\binom{k}{2}}\right)^2}\le h(j)=\left(\frac{\ell-k}{\ell-1}\right)^{k-j-1} \left(\frac{k^{2}}{\ell-k+1}\left(\frac{1}{p}\right)^{k-\left(\frac{j+1}{2}\right)}\right)^j $$

To prove the lemma, we handle the cases of $j\le 2$ and $j\ge 2$  separately. First we handle the latter case. For $j\ge 2$, we prove that $h(j)\le \frac{1}{(k-2)(\ell-k+1)}$. In other words, we prove that $\log h(j)+\log(k-2)+\log(\ell-k+1)<0$ for constant $k$.

Let $y(j)=\log (h(j))=(k-1-j)(\log(\ell-k)-\log(\ell-1))+j(2\log k - log(\ell-k+1))+j\left(k-\frac{(j+1)}{2}\right)\log{\frac{1}{p}}$.  Consider the continuous function 
$y(x)=(k-1-x)(\log(\ell-k)-\log(\ell-1))+x(2\log k - log(\ell-x+1))+x\left(k-\frac{(x+1)}{2}\right)\log{\frac{1}{p}}$. Therefore we have 
\begin{eqnarray*}
y'(x)=\frac{\partial y(x)}{\partial x}&=&log (\ell -1)-(\log(\ell-k)+\log(\ell-k+1))+2\log k+\left(k-x+\frac{1}{2}\right)\log{\frac{1}{p}}\\
&\le& log (\ell -1)-2(\log(\ell-k))+2\log k+\left(k-x+\frac{1}{2}\right)\log{\frac{1}{p}}\\
\end{eqnarray*}

Observe that for large $\ell$ and for constant $k$, the term $-2(\log(\ell-k))$ dominates all the other terms, so $y'(x)<0$ for $1\le x\le k-1$. Therefore $y(x)$ is a decreasing function.  We analyze cases $j=1$ and $j\ge 2$ separately.
First we analyze latter case.  We prove that $y(2) \le\log\left(\frac{1}{(k-2)(\ell-k+1)}\right)$, which implies that $y(j)= \log\left(\frac{1}{(k-2)(\ell-k+1)}\right)$ for $2\le j\le k-1$. This proves that $h(i)=\frac{1}{(k-2)(\ell-k+1)}$ for $j\ge 2$ , eventually proving that
\begin{eqnarray}\label{eq:app-ige2}
\sum_{j=2}^{k-1}\frac{\ell\Pe{\ell}{2k-j-1}f_{k,2k-j-1}(p)}{\left(\Pe{\ell}{k}p^{\binom{k}{2}}\right)^2} \le \frac{1}{\ell-k+1}
\end{eqnarray} 

Consider the function $g(\ell)=y(2)+\log (k-2)+\log (\ell-k+1)$. So we have 
\begin{eqnarray*}
g(\ell)&=&(k-3)(\log(\ell-k)-\log(\ell-1))+2(2\log k - \log(\ell-k+1))\\&&+(2k-3)\log\frac{1}{p} +\log(\ell-k+1)+\log(k-2)\\
& \le  & 4\log k +(2k-3)\log\frac{1}{p}+\log(k-2)- \log(\ell-k+1)
\end{eqnarray*}

Note that for constant $k$, this is smaller than $0$ for large enough $\ell$. Therefore $g(\ell)<0$, hence the claim.

Now we do the analysis for $j=1$. We calculate $f_{k,2k-2}(p)$ using the recurrence. 
\begin{eqnarray*}
f_{k,2k-2}(p)&=& p^{2k-3}\left((2k-3)f_{k-1, 2(k-1)-1}(p)+f_{k-1, 2(k-1)-2}(p)\right)\\
&=& (2k-3)p^{2k-3+2\binom{k-1}{2}}+p^{2k-3}f_{k-1,2(k-1)-2} (p) \text{ (using $f_{k-1,2(k-1)-1}=p^{2\binom{k-1}{2}}$)}\\
&=&(2k-3) p^{2\binom{k}{2}-1} +p^{2k-3+2k-5}\left((2k-5)f_{k-2,2(k-2)-1}(p)+f_{k-2, 2(k-2)-2}(p)\right)\\
&=&(2k-3)p^{2\binom{k}{2}-1}+(2k-5)p^{2\binom{k}{2}-2}+p^{2k-3+2k-5}f_{k-2, 2(k-2)-2}(p)\\
\end{eqnarray*}

Going on as shown in the above equation, we get $f_{k,2k-2}(p)=\sum_{m=1}^{k-1}(2(k-m)-1)p^{2\binom{k}{2}-m}$. Therefore we have
\begin{eqnarray*}
\frac{\ell\Pe{\ell}{2k-2}}{\left(\Pe{\ell}{k}p^{\binom{k}{2}}\right)^2}&=&\frac{\ell\Pe{\ell}{2k-2}}{(\Pe{\ell}{k})^2}\left(\sum_{m=1}^{k-1}\frac{2(k-m)-1}{p^m}\right)\\
\end{eqnarray*}

Note that 
\begin{eqnarray}\label{eq:app-ie1}
\left(\sum_{m=1}^{k-1}\frac{2(k-m)-1}{p^m}\right)&=&\frac{1}{\frac{1}{p}-1}\left( 2\left(\frac{\frac{1}{p^k}-1}{{\frac{1}{p}-1}}\right)+\frac{1}{p^k}-\left(\frac{2k+2p-1}{p}\right)\right)\nonumber \\
&\le& \frac{C}{p^k} \text{ (for large enough constant $C$)}
\end{eqnarray}

Note that for constant $k$, $\frac{C}{p^k}=C'$ is a constant.
Therefore, using \Eqn{app-ige2} and \Eqn{app-ie1} we have 
\begin{eqnarray*}
\sum_{j=0}^{k-1}\frac{\ell\Pe{\ell}{2k-j-1}f_{k,2k-j-1}(p)}{\left(\Pe{\ell}{k}p^{\binom{k}{2}}\right)^2} \le 1+\left(\frac{C'+1}{\ell-k+1}\right)
\end{eqnarray*}
Hence the lemma.
\end{proof}

Note that \Lem{app-cri-rat-int-bound} shows that $\crr(K_i)=\E_\cG[\E_\cA[K_i^2]]/\E_\cG[\E_\cA[K]]^2=1+O\left(\frac{1}{n-ki+1}\right)$. Note that \Thm{app-clique-covers} follow from \Lem{app-cri-rat-int-bound} since 
$\prod_{i=1}^{\frac{n}{k}}\crr(K_i)=\poly(n)$ in this case.

\section{Conclusion and open problems}
In this work, we show the first fpras for counting $k$-cliques, where $k=(1+o(1))\lp n$ and $k$-clique covers (for constant $k$) in random graphs, using the unbiased estimators that are very simple to describe. Both problems are \#P-complete in general for the respective values of $k$. 
Getting a fpras for these problems over general graphs is a long standing open problem. Here are some specific open problems that we think are worth investigating.
\begin{enumerate}
\item The problem of counting clique is still open for counting cliques of size greater $(1+o(1))\lp n$. Solving this will resolve the open problem of Frieze and McDiarmid~(\cite{FM97}) completely, though, this is probably very hard to solve~\cite{CE11}.
\item Another specific problem to resolve here is to count clique covers of superconstant sized cliques. 
\item The determinant based estimators usually have smaller worst case running times in fpras (e.g.~\cite{C04}) for random graphs. It is unclear to us how to obtain any determinant based unbiased estimators for the clique and clique cover counting problems.
\end{enumerate}

\bibliographystyle{plain}
\bibliography{references}

\end{document}